\journal{}
\newtheorem{definition}{Definition}
\newtheorem{theorem}{Theorem}
\newtheorem{proof}{Proof}
\begin{document}

\begin{frontmatter}

\title{ Application of a new information priority accumulated grey model with time power to predict short-term wind turbine capacity \tnoteref{mytitlenote}}

\tnotetext[mytitlenote]{Please cite this article as: Jie Xia, Xin Ma, Wenqing Wu, Baolian Huang, Wanpeng Li.
Application of a new information priority accumulated grey model with time power to predict short-term wind turbine capacity. Journal of Cleaner Production, Volume 244, 2020, 118573, doi: https://doi.org/10.1016/j.jclepro.2019.118573.}

\author{Jie Xia \fnref{label1,label2}}
 \address[label1]{School of Mathematical Sciences, University of Electronic
                  Science and Technology of China, 611731, Chengdu, China}
 \address[label2] {School of Science, Southwest University of
                  Science and Technology, 621010, Mianyang, China}
 \author[label2]{Xin Ma \corref{cor1}}
 \cortext[cor1]{Corresponding author. Xin Ma}
 \ead{cauchy7203@gmail.com}
\author{Wenqing Wu \fnref{label2}}
  \author{Baolian Huang \fnref{label3}}
  \address[label3]{School of Economics,
                   Wuhan University of Technology, 430070, Wuhan, China}
\author{Wanpeng Li \fnref{label4}}
  \address[label4]{School of Computing, Mathematics and Digital Technology,
                   Manchester Metropolitan University, Manchester, United Kingdom}

\begin{abstract}
Wind energy makes a significant contribution to global power generation.
Predicting wind turbine capacity is becoming increasingly crucial for cleaner production.
For this purpose, a new information priority accumulated grey model with time power is proposed to predict
short-term wind turbine capacity.
Firstly, the computational formulas for the time response sequence and
the prediction values are deduced by grey modeling technique and the definite integral trapezoidal approximation formula.
Secondly, an intelligent algorithm based on particle swarm optimization is applied to determine the optimal nonlinear parameters of the novel model.
Thirdly, three real numerical examples are given to examine the accuracy of the new model by comparing with six existing prediction models.
Finally, based on the wind turbine capacity from 2007 to 2017, the proposed model is established to predict the total wind turbine capacity in Europe, North America, Asia, and the world. The numerical results reveal that the novel model is superior to other
 forecasting models. It has a great advantage for small samples with new characteristic behaviors.
 Besides, reasonable suggestions are put forward from the standpoint of the practitioners and governments, which has high potential to advance the sustainable improvement of clean energy production in the future.
\end{abstract}
\begin{keyword}
Wind turbine capacity \sep Energy economics \sep Grey system model \sep Particle swarm optimization

\end{keyword}
\end{frontmatter}


\begin{table}[!htbp]
 \label{table:Nomenclature}
 \renewcommand{\baselinestretch}{1.25}
 {
 \tabcolsep=6pt
 \begin{tabular}{llllllllllllllll}
 {\bf{Nomenclature}}\\
 \hline
GM(1,1)                         &the basic grey system model\\
NGM(1,$N)$                      & non-linear grey multivariable models\\
NGM(1,1,$k$)                    & non-homogeneous exponential grey model\\
GMCN(1,$N$)                     & new information priority accumulated grey multivariable convolution model \\
NGM(1,1,$k,c$)                  & extended non-homogeneous exponential grey model \\
GAGM(1,1)                       & non-equidistance generalized accumulated grey forecasting model\\
DGM(1,1)                        & discrete grey model\\
NIPDGM(1,1)                     & new information priority accumulated discrete grey model\\
GMC(1,$N$)                      & convolution integral grey prediction model\\
GRM(1,1)                        & non-equidistant grey model based on reciprocal accumulated generating\\
FAGM(1,1)                       & fractional-order grey model\\
FDGM                            & fractional multivariate discrete grey model\\
NIPGM$(1,1,{t}^{\alpha })$      & new information priority accumulated grey model with time power\\
1-AGO                           & first-order accumulative generation operation \\
1-IAGO                          & first-order inverse accumulative generation operation \\
1-NIPAGO                        & first-order new information priority accumulated generation operation\\
1-NIPIAGO                       &first-order new information priority inverse accumulated generation operation\\
${{S}^{0}}$                     &the non-negative original sequence\\
${{S}^{1}}$                     &first-order new information priority accumulated generation operation sequence\\
${{S}^{-1}}$                    &first-order new information priority inverse accumulated generation operation\\
 $H^{1}$                        & the mean generation sequence with consecutive neighbors \\
$\lambda$                       & the accumulation generation parameter\\
$a,b,c $                        & the parameters of the grey system\\
$s_{k}^{0}$                     & the observational data for the system input at time $k$ \\
$s_{k}^{1}$                     & the 1-NIPAGO data at time $k$ \\
$s_{k}^{-1}$                    & the 1-NIPIAGO data at time $k$ \\
$\hat{s}_{k}^{0}$               & the prediction value at time $k$ \\
$\hat{s}_{k}^{1}$               &the time response value at time $k$ \\
$h_{k}^{1}$                     & the background value at time $k$ \\
${\rm APE}$                     & the absolute percentage error\\
${\rm RMSEPR}$                  & the root mean square error of prior-sample\\
${\rm RMSE}$                    &the root mean square error\\
${\rm RMSEPO}$                  &the root mean square error of post-sample\\
${\rm PR}(n)$                   &polynomial regression model\\
PSO                             &particle swarm optimization\\
${\rm ARIMA}(p,d,q)$            &autoregressive integrated moving average model\\
CWEC                            & Global Wind Energy Council \\
\hline
\end{tabular}}
\end{table}

\newpage
\section{Introduction}
\label{sec1:intro}

With the global energy crisis and environmental pollution, clean and renewable energy has received extensive attention in recent years\citep{Pali2018A,LU201968}.
Wind energy is one of the most rapidly growing clean energies, which produces a great deal of electrical
energy by wind turbines  \citep{SHOAIB2019346}.
Wind power generation accounts for an increasing proportion in the global power production structure\citep{Jr2018Comparison}.
As reported by BP in {\it Statistical Review of World Energy} 2018, global wind turbine capacity growth averaged 20.2\% per year in the past decade.
And the wind turbine capacity of Asia, Europe, and North America accounted for 95.6\% of the world, while other
regions accounted for only 4.4\% in 2017.
Recently, the Global Wind Energy Council(CWEC) said that the wind power capacity of the world is expected to increase by 50\% and exceed more than 300 million kilowatts by 2023.
Wind energy has significant advantages and good development prospects in the development of cleaner production\citep{Kiaee2018Utilisation}.
The reason is that it can reduce carbon dioxide emissions and fossil fuels burning\citep{13285440920190110,MA2019915}.
Therefore, it is an inevitable choice for the global long-term energy strategy to develop wind energy, which can ensure sufficient energy supply\citep{Zeng2018shale}.
Hence, accurately predicting the wind turbine capacity in Asia, Europe, North America, and the world is very important for decision-makers.

In the previous studies, many scholars have proposed many models to forecast the wind turbine capacity, including logistic
model\citep{Shafiee2015Maintenance}, autoregressive sliding average model\citep{Wen2012Wind},
time series analysis\citep{Safari2018A}, support vector regression\citep{ZENDEHBOUDI2018272}, neural network prediction
model\citep{Chang2017An}, combined forecasting model\citep{Buhan2015Multi},
grey machine learning\citep{Ma2019machine,Wang2018nonlinear}, and grey model\citep{Wu2018Identifying,ZENG2019385}.
Among the many forecasting methods, the regression analysis method  uses the indicators related to the wind turbine capacity to build the model, which requires lots of samples. The calculation principle of time series model is
simple but can not reflect its intrinsic influencing factors.
The artificial neural network forecasting model has
excellent predictive ability for nonlinear data. But it is difficult to search the optimal solution so that it
cannot meet the accuracy requirements.
However, the grey model differs from other forecasting models that it requires a small sample with just 4 data or more.
Collecting sufficient samples is challenging in practical applications. Thus, more and more scholars have extensively concerned the grey system model.

Grey system theory along with grey models are initially put forward by \citet{Deng1982Control} to solve uncertain problems.{
 Because of the practicability of the grey model, the grey system has become a research direction with distinctive characteristics.
 The classical GM$(1,1)$ model has been generalized to other effective grey forecasting models,
including NGM$(1,1,k)$\citep{Cui2009Novel}, DGM$(1,1)$\citep{Rong2009Application},  NGM(1,$N)$ \citep{11978737120170121},
and CFGM\citep{MA2019cfgm}.
These grey models have been successfully applied in the environment\citep{Wu2018Using}, economy\citep{YIN2018815}, energy\citep{Wu2018energy}
 and other related fields\citep{Ding2018production,DUAN2019104853}.
 From the idea of GM $(1,1)$ modeling, it is the least-squares modeling method that follows the law of accumulated grey-index.
  And the traditional GM$(1,1)$ model has great forecasting effect on the data with homogeneous exponential law, and improved models
 have this characteristic.
There are a large number of systematic development laws that do not conform to the exponential law in real
 life. For the data with partial exponential features and time power terms, \citet{qian2012} constructed a novel GM $(1, 1,{t}^{\alpha })$ model.
 However, these grey models are built by using the first-order accumulative generation operation (1-AGO)\citep{Deng1982Control}.
And the restored values of these models are deduced by using the first-order inverse accumulative generation operation (1-IAGO).
Therefore, sequence accumulation generation is one of the critical steps of grey information mining and modeling.

In many references, the research on grey accumulation generation is mainly divided into two categories.

{\bf 1)} The idea of accumulation generation is combined with other forecasting models.
\citet{Sheng2008A} put forward a GSVMG model,
and it was used to forecast patent application filings, which obtained higher prediction accuracy.
\citet{Liu2011Ship} developed a GMRBF $(2, 1)$ model
by combining the RBF neural network with grey accumulation generation, which was successfully applied to forecast ship carrying capacity.
Recently, it is noteworthy that \citet{Zhou2017New} defined the new accumulated generation operator with a parameter. And then a NIPDGM(1,1) was constructed, which obtained an excellent prediction effect in energy prediction of Jiangsu province.
Later, \citet{Wu2018Grey} proposed the GMCN(1, $N)$ model by combining the new information priority principle with GM(1, $N$) model.

{\bf 2)} The expansion of the grey accumulation generation technology. \citet{Jie2010Non} established a new grey GAGM(1, $1)$ model by
 generalized accumulative generation operation.
This model was suitable for the unequal spacing sequences with the jumping trend and multistage.
 Based on the parallel number cumulative generation operation, the new grey GRM$(1,1)$ model was proposed by \citet{Xiao2012Grey}, which had practicality and reliability. To reduce the perturbation of the grey model solution, \citet{Wu2013Grey} introduced a fractional-order accumulation method and constructed a new FAGM(1,1) model.
 Later, a new seasonal discrete grey prediction model with periodic effects was proposed by \citet{XIA2014119}, which was successfully applied to forecast fashion consumer goods. Recently, \citet{Ma2019fractional} constructed the FDGM model and optimized it with the Gray Wolf algorithm.

Through the review and analysis of the above literature, it can be noticed that many models do not consider the new information priority principle. This may be the reason for the poor prediction accuracy.
 In order to solve this challenge for grey GM$(1,1,{t}^{\alpha })$ model, this study constructs a novel new information priority accumulated grey model with time power.
Furthermore, the computational formulas for the sequence of time response and the values of prediction are deduced.
 Another problem of the current grey model with new information priority accumulation\citep{Wu2018Grey} is that no detailed optimization algorithm has been used to seek the optimum solution of parameters.
 Therefore, we establish an optimization model to search the parameters and use the PSO algorithm to determine the optimized values of the novel model.
 Then, the novel model is applied to predict wind turbine capacity in Europe, North America, Asia, and the world. The numerical calculation results are compared with several existing models.
 Finally, according to the prediction results of wind turbine capacity in these regions from 2018 to 2020, reasonable suggestions on clean energy production are provided.

The remainder of this research is structured as below: Section \ref{sec:NIPGM11} systematically discusses the novel grey model.
 Section \ref{sec:model-check} gives how to optimize the nonlinear parameters of the novel model by an intelligent algorithm.
Section \ref{sec:validation} validates the accuracy of the novel model through three real cases. Section \ref{sec:applications} predicts wind turbine capacity by using seven forecasting models, and Section \ref{sec:conclu} gives the conclusions of the study.

\section{The new information priority accumulated grey model with time power}
\label{sec:NIPGM11}

\subsection{Definition of new information priority accumulation}
 \label{subsec:priority-accumulation}

 \begin{definition}
\label{defiction1}
(see \citet{Zhou2017New})
 Set the non-negative historical data sequence as
${{S}^{0}}$ $=\left\{ s_{k}^{0}|k\in 1,2,\cdots ,m \right\}$,
 the first-order new information priority accumulated generation operation sequence (1-NIPAGO) of ${{S}^{0}}$ is
 ${{S}^{1}}=\left\{ s_{k}^{1}|k\in 1,2,\cdots ,m \right\}$. There is
\begin{equation}
s_{k}^{1}=\sum_{v=1}^{k} \lambda^{k-v} s_{v}^{0}, \quad \lambda \in(0,1), k\in 1,2,3, \cdots ,m,
 \label{Eq:num2}
\end{equation}
and $\lambda$ presents the accumulation generation parameter, which is used to adjust the weight of the sequence.
 Eq.(\ref{Eq:num2}) is named new information priority accumulation.
 \end{definition}

 In previous studies, \citet{Wu2018Grey} proved that the weight of ``new" 1-NIPAGO series $s_{k}^{1}$ is larger than the ``old" ones.
Being similar to the traditional grey model accumulation generation, ${{S}^{0}}$ can be accumulated and generated multiple times according to the accumulation mentioned above,
and the multiple new information of ${{S}^{0}}$ can obtain the priority accumulation generation sequence ${{S}^{n}}$.

\begin{definition}
\label{defiction2}
(see \citet{Zhou2017New})
Assuming the first-order new information priority inverse
accumulated generation operation sequence (1-NIPIAGO) of ${{S}^{0}}$ is  ${{S}^{-1}}=\left\{ s_{k}^{-1}|k\in 1,2,\cdots ,m \right\}$, where
\begin{equation}
\left\{\begin{array}{l}{s_{k}^{-1}=s_{k}^{0}-\lambda s_{k-1}^{0}} \\ {s_{1}^{-1}=s_{1}^{0}}\end{array}\right., \quad \lambda \in(0,1), k\in 2,3,4, \cdots, m.
 \label{Eq:num3}
\end{equation}
\end{definition}

It is worth noting that new information priority accumulated and new information priority inverse accumulated have the following relationship, these is
\begin{equation}
\begin{aligned} s_{k}^{1}-\lambda s_{k-1}^{1} &=\sum_{v=1}^{k} \lambda^{k-v} s_{v}^{0}-\lambda \sum_{v=1}^{k-1} \lambda^{k-v-1} s_{v}^{0}  &=s_{k}^{0}.
\end{aligned}
 \label{Eq:num4}
\end{equation}

 The Eq.(\ref{Eq:num4}) is particularly important when establishing grey forecasting models and calculating prediction values for the original sequence,
as shown below.

\subsection{The definite integral trapezoidal approximation formula}
 \label{subsec:Trapezoidal-approximation}
 In this subsection, the problem of estimating the value of the integral $\int_{{{d}_{1}}}^{{{d}_{2}}}{f(u)du}$ is discussed.
 First, the interval $\left[ {{d}_{1}},{{d}_{2}} \right]$ is divided into $k$ sub-intervals of width
  $\Delta d=\frac{{{d}_{2}}-{{d}_{1}}}{k}$, and each sub-interval can be expressed as:
  $\left[ {{\gamma }_{0}},{{\gamma }_{1}} \right], \cdots, \left[ {{\gamma }_{k-1}},{{\gamma }_{k}} \right]$,
  where ${{\gamma }_{0}}={{d}_{1}}$ , ${{\gamma }_{k}}={{d}_{2}}$ .

Further, let the function value of each point ${{\gamma }_{0}},{{\gamma }_{1}},{{\gamma }_{2}},\cdots,{{\gamma }_{k}}  $ of $f(u)$ correspond to $f(\gamma_{0}), f(\gamma_{1}), f(\gamma_{2})$, $\cdots,$ $ f(\gamma_{k})$, as shown in Fig. \ref{fig:figure1}.
\begin{figure}[!htbp]
\centering
\includegraphics[height=6.1cm,width=10cm]{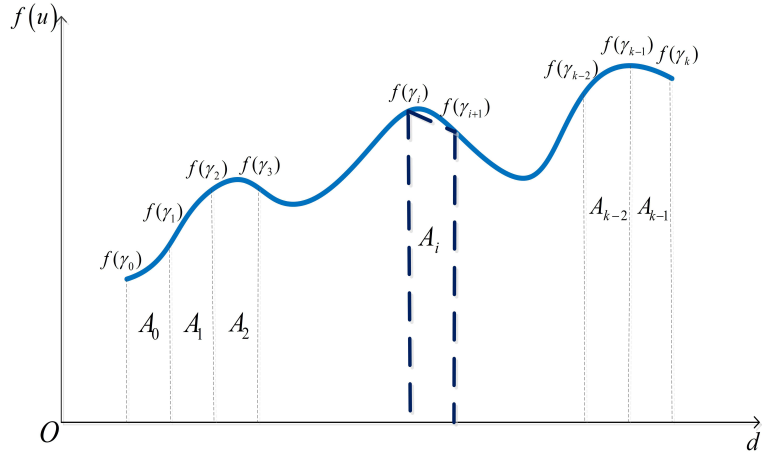}
\caption{The diagram of trapezoidal formula to estimate curve integral.}
 \label{fig:figure1}
\end{figure}

Then the area of each narrow trapezoid is:
\begin{equation}
A_{i}=\frac{\Delta d}{2}\left(f\left(\gamma_{i}\right)+f\left(\gamma_{i+1}\right)\right).
\label{Eq:num5}
\end{equation}

Therefore, if there are $k$ sub-intervals, the integral can be approximated as
\begin{equation}
\int_{d_{1}}^{d_{2}} f(u) d u \approx \frac{\Delta d}{2} \sum_{i=0}^{k-1}\left(f\left(\gamma_{i}\right)+f\left(\gamma_{i+1}\right)\right),
\label{Eq:num6}
\end{equation}
which is the definite integral trapezoidal approximate formula.

\subsection{Modeling process of NIPGM$(1,1,{t}^{\alpha })$ model}
 \label{subsec:NIPGM11-new}
The tradition grey model with time power was established by \citet{qian2012}. In the following, the novel new information priority accumulated grey model with time power is defined as.

\begin{definition}
\label{defiction3}
Set the non-negative original  sequence as ${{S}^{0}}$,
the 1-NIPAGO of ${{S}^{0}}$ is  ${{S}^{1}}$, ${{s}_{k}^{1}}$ is shown in Eq.(\ref{Eq:num2}).
The mean generation sequence with consecutive neighbors is $H^{1}=\left\{h_{k}^{1} | k \in 2,3, \cdots, m\right\}$,
where $h_{k}^{1}=0.5 s_{k}^{1}+0.5 s_{k-1}^{1}$.
\end{definition}

\begin{definition}
\label{defiction4}
 Set ${{S}^{0}}$, ${{S}^{1}}$ and ${{H}^{1}}$ be shown in Definition \ref{defiction3}, there is
\begin{equation}
s_{k}^{0}+a h_{k}^{1}=b k^{\alpha}+c,
\label{Eq:num7}
\end{equation}
is named the mathematical form of NIPGM(1,1,${{t}^{\alpha }}$), then,
\begin{equation}
\frac{d s_{t}^{1}}{d t}+a s_{t}^{1}=b t^{\alpha}+c,
\label{Eq:num8}
\end{equation}
which is named the whitening equation of NIPGM (1,1,${{t}^{\alpha }}$). $\alpha $ presents a non-negative constant, $a$ presents development coefficient,  and the amount of grey action is $b{{t}^{\alpha }}+c$.
\end{definition}

The Eq.(\ref{Eq:num8}) is integral on interval $\left[ k-1,k \right]$, there is
\begin{equation}
s_{k}^{1}-s_{k-1}^{1}+0.5 a\left(s_{k}^{1}+s_{k-1}^{1}\right)=b \frac{k^{1+\alpha}-(k-1)^{1+\alpha}}{1+\alpha}+c,
\label{Eq:num10}
\end{equation}

Because of $h_{k}^{1}=0.5 s_{k}^{1}+0.5 s_{k-1}^{1} $, Eq.(\ref{Eq:num10}) turns to be
\begin{equation}
s_{k}^{1}-s_{k-1}^{1}+a h_{k}^{1}=b \frac{k^{1+\alpha}-(k-1)^{1+\alpha}}{1+\alpha}+c,
\label{Eq:num11}
\end{equation}

Further, there is
\begin{equation}
-a h_{k}^{1}+b \frac{k^{1+\alpha}-(k-1)^{1+\alpha}}{1+\alpha}+c=s_{k}^{1}-s_{k-1}^{1},
\label{Eq:num12}
\end{equation}

 \begin{theorem}
 \label{theorem1}
Suppose ${{S}^{0}}$, ${{S}^{1}}$ and ${{H}^{1}}$ are defined in Definition \ref{defiction3},
$\hat{r}={{\left[ a,b,c \right]}^{T}}$ is a parameter column, the least-squares parameter estimate of the novel model
 satisfies $\hat{r}={{({{F}^{T}}F)}^{-1}}{{F}^{T}}G$, where
\begin{equation}
F=\left( \begin{matrix}
   -h_{2}^{1} & \frac{{{2}^{1+\alpha }}-1}{1+\alpha } & 1  \\
   -h_{3}^{1} & \frac{{{3}^{1+\alpha }}-{{2}^{1+\alpha }}}{1+\alpha } & 1  \\
   \vdots  & \vdots  & \vdots   \\
   -h_{m}^{1} & \frac{{{m}^{1+\alpha }}-{{(m-1)}^{1+\alpha }}}{1+\alpha } & 1  \\
\end{matrix} \right),
G=\left( \begin{matrix}
   s_{2}^{1}-s_{1}^{1}  \\
   s_{3}^{1}-s_{2}^{1}  \\
   \vdots   \\
   s_{m}^{1}-s_{m-1}^{1}  \\
\end{matrix} \right).
\label{Eq:num13}
\end{equation}
 \end{theorem}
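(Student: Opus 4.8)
The plan is to recognize Eq.~(\ref{Eq:num12}) as a single linear constraint on the unknown parameter vector $\hat r = [a,b,c]^T$ for each index $k$, to assemble these constraints into one overdetermined linear system, and then to invoke the least-squares principle to arrive at the claimed closed form. Everything rests on the relation already derived in Eq.~(\ref{Eq:num12}), so no new modeling input is needed.

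First I would write Eq.~(\ref{Eq:num12}) for a fixed $k$ as the inner product of the row vector $\left[-h_k^1,\ \frac{k^{1+\alpha}-(k-1)^{1+\alpha}}{1+\alpha},\ 1\right]$ with $\hat r$, set equal to the scalar $s_k^1 - s_{k-1}^1$. Letting $k$ range over $2,3,\ldots,m$ yields $m-1$ such scalar equations. Stacking the row vectors row by row reproduces exactly the matrix $F$ of Eq.~(\ref{Eq:num13}), and stacking the corresponding right-hand scalars reproduces the column $G$, so the full system is simply $F\hat r = G$. This bookkeeping step is purely a matter of identifying the matrix structure hidden in Eq.~(\ref{Eq:num12}).

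Next, because there are $m-1$ equations but only three unknowns, for $m>4$ the system is overdetermined and generally has no exact solution, reflecting the fact that Eq.~(\ref{Eq:num12}) is a modeling relation rather than an identity that any single $\hat r$ can satisfy at every $k$. I would therefore pass to the least-squares formulation, selecting $\hat r$ to minimize the residual sum of squares $\|G - F\hat r\|^2 = (G - F\hat r)^T(G - F\hat r)$. Expanding this quadratic in $\hat r$, differentiating, and setting the gradient to zero produces the normal equations $F^T F\,\hat r = F^T G$. Left-multiplying by $(F^T F)^{-1}$ then gives $\hat r = (F^T F)^{-1} F^T G$, as asserted.

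The only step that is not purely mechanical is justifying that the $3\times 3$ Gram matrix $F^T F$ is invertible, i.e.\ that $F$ has full column rank: this requires the three columns of $F$ — the negated background values $-h_k^1$, the discrete time-power increments $\frac{k^{1+\alpha}-(k-1)^{1+\alpha}}{1+\alpha}$, and the constant column of ones — to be linearly independent. The hard part will therefore be arguing that these three data-dependent columns are not collinear; in practice this holds generically for the sample sizes, data ranges, and exponents $\alpha$ that arise, so the inverse exists and the formula is well defined. Apart from this rank condition, the derivation is the standard passage from an overdetermined system to its least-squares normal equations.
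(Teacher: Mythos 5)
Your proposal is correct and follows essentially the same route as the paper: substitute $k=2,3,\ldots,m$ into Eq.~(\ref{Eq:num12}), stack the resulting equations into the matrix system $F\hat r=G$, and read off the least-squares solution $\hat r=(F^TF)^{-1}F^TG$. You are in fact slightly more complete than the paper, which simply asserts the formula after writing the matrix form, whereas you derive the normal equations explicitly and correctly flag the (unaddressed) requirement that $F$ have full column rank so that $F^TF$ is invertible.
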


\begin{proof}
Using the method of the mathematical induction, take $k=2,3,\cdots,m$ into Eq.(\ref{Eq:num12}), there is
\begin{equation}
\left\{\begin{array}{l}{a\left(-h_{2}^{1}\right)+b \frac{2^{1+\alpha}-1}{1+\alpha}+c=s_{2}^{1}-s_{1}^{1}} \\
 {a\left(-h_{3}^{1}\right)+b \frac{3^{1+\alpha}-2^{1+\alpha}}{1+\alpha}+c=s_{3}^{1}-s_{2}^{1}} \\
 ~~~~~~~~~~~~~~~~~~~~~~~~~~~~~ \vdots   \\
 {a\left(-h_{m}^{1}\right)+b \frac{m^{1+\alpha}-(m-1)^{1+\alpha}}{1+\alpha}+c=s_{m}^{1}-s_{m-1}^{1}}\end{array}\right.
\label{Eq:num14}
\end{equation}

Convert the Eq.(\ref{Eq:num14}) into the matrix form, then
\begin{equation}
\left( \begin{matrix}
   -{h_{2}^{1}} & \frac{{{2}^{1+\alpha }}-1}{1+\alpha } & 1  \\
   -{h_{3}^{1}} & \frac{{{3}^{1+\alpha }}-{{2}^{1+\alpha }}}{1+\alpha } & 1  \\
   \vdots  & \vdots  & \vdots   \\
   -{h_{m}^{1}} & \frac{{{m}^{1+\alpha }}-{{(m-1)}^{1+\alpha }}}{1+\alpha } & 1  \\
\end{matrix} \right)\left( \begin{matrix}
   a  \\
   b  \\
   c  \\
\end{matrix} \right)=\left( \begin{matrix}
   s_{2}^{1}-s_{1}^{1}  \\
   s_{3}^{1}-s_{2}^{1}  \\
   \vdots   \\
   s_{m}^{1}-s_{m-1}^{1}  \\
\end{matrix} \right).
\label{Eq:num15}
\end{equation}

In summary
\begin{equation}
\hat{r}=\left( \begin{array}{l}{a} \\ {b} \\ {c}\end{array}\right)=\left(F^{T} F\right)^{-1} F^{T} G.
\label{Eq:num16}
\end{equation}
\end{proof}

\begin{theorem}
 \label{theorem2}
Suppose $F$, $G$, $\hat{r}$ are described in Theorem \ref{theorem1}, the sequence of time response $\hat{s}_{k}^{1}$ of the NIPGM(1,1,${{t}^{\alpha }}$) model is:
\begin{equation}
\hat{s}_{k}^{1}=\left(s_{1}^{0}-\frac{c}{a}\right) e^{-a(k-1)}+\frac{b}{2} e^{-a(k-1)} \sum_{\gamma=1}^{k-1} \left\{\gamma^{\alpha} e^{a(\gamma-1)}+(\gamma+1)^{\alpha} e^{a \gamma}\right\}+\frac{c}{a},\quad k=2,3, \cdots, m,
\label{Eq:num17}
\end{equation}
then the restored values of $\hat{s}_{k+1}^{0}$ can be deduced by using the 1-NIPIAGO,
\begin{equation}
\begin{aligned} \hat{s}_{k+1}^{0}=& e^{-a(k-1)}\left(e^{-a}-\lambda\right)\left\{s_{1}^{0}+\frac{b}{2} \sum_{\gamma=1}^{k-1}\left\{\gamma^{\alpha} e^{a(\gamma-1)}+(\gamma+1)^{\alpha} e^{\alpha \gamma}\right\}\right\} \\ &+\frac{b}{2} \left\{k^{\alpha} e^{-a} +(k+1)^{\alpha}\right\}+\frac{c}{a} \left( 1-e^{-a}\right),\quad k=2,3, \cdots, m. \end{aligned}
\label{Eq:num18}
\end{equation}
\end{theorem}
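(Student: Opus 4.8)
The plan is to establish Theorem \ref{theorem2} in two stages, matching its two displayed formulas: first derive the time-response sequence $\hat{s}_k^1$ from the whitening equation, and then apply the inverse accumulation operator to recover $\hat{s}_{k+1}^0$. For the first stage I would treat Eq.(\ref{Eq:num8}) as a first-order linear ODE in $s_t^1$ and solve it by the integrating-factor method. Multiplying through by $e^{at}$, integrating from $1$ to $t$, and imposing the anchor $s_1^1 = s_1^0$ (which is Eq.(\ref{Eq:num2}) evaluated at $k=1$) gives
\[
s_t^1 = \left(s_1^0 - \tfrac{c}{a}\right)e^{-a(t-1)} + \tfrac{c}{a} + b\,e^{-at}\int_1^t \tau^\alpha e^{a\tau}\,d\tau,
\]
where the elementary part $\int_1^t c\,e^{a\tau}\,d\tau$ has already been evaluated in closed form and folded into the $\tfrac{c}{a}$ terms. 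The surviving integral $\int_1^t \tau^\alpha e^{a\tau}\,d\tau$ has no elementary antiderivative for general $\alpha$, and this is exactly the point at which the trapezoidal device of Section \ref{subsec:Trapezoidal-approximation} is needed.

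Next I would set $t=k$ and approximate $\int_1^k \tau^\alpha e^{a\tau}\,d\tau$ by Eq.(\ref{Eq:num6}) with unit-width subintervals, i.e.\ nodes $\gamma = 1,2,\ldots,k$ and $\Delta d = 1$, taking $f(\tau) = \tau^\alpha e^{a\tau}$. This replaces the integral by $\tfrac{1}{2}\sum_{\gamma=1}^{k-1}\{\gamma^\alpha e^{a\gamma} + (\gamma+1)^\alpha e^{a(\gamma+1)}\}$; pulling the prefactor $e^{-ak}$ inside the sum and rewriting $e^{-ak}e^{a\gamma} = e^{-a(k-1)}e^{a(\gamma-1)}$ turns each summand into the bracketed form appearing in Eq.(\ref{Eq:num17}). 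Collecting the homogeneous, constant, and summation contributions then reproduces $\hat{s}_k^1$ exactly.

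For the second stage I would invoke the inverse relation $\hat{s}_{k+1}^0 = \hat{s}_{k+1}^1 - \lambda\,\hat{s}_k^1$, which is the predictive reading of Eq.(\ref{Eq:num4}). Abbreviating $P_k = \sum_{\gamma=1}^{k-1}\{\gamma^\alpha e^{a(\gamma-1)} + (\gamma+1)^\alpha e^{a\gamma}\}$ and using the one-step recursion $P_{k+1} = P_k + \{k^\alpha e^{a(k-1)} + (k+1)^\alpha e^{ak}\}$, I would substitute Eq.(\ref{Eq:num17}) at both indices $k+1$ and $k$, factor the common $e^{-a(k-1)}(e^{-a}-\lambda)$ out of the homogeneous and $P_k$ parts, and simplify the freshly separated top term of $P_{k+1}$ (after cancellation of exponentials against $e^{-ak}$) into the closed expression $\tfrac{b}{2}\{k^\alpha e^{-a} + (k+1)^\alpha\}$ of Eq.(\ref{Eq:num18}).

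The hard part will not be the ODE or the trapezoidal step but the bookkeeping of the constant $\tfrac{c}{a}$ and the homogeneous $e^{-a(k-1)}$ contributions when forming the difference $\hat{s}_{k+1}^1 - \lambda\,\hat{s}_k^1$: these two families of terms must be recombined by hand, and it is easy to misattribute whether a surviving prefactor should read $(1-\lambda)$, $(1-e^{-a})$, or $(e^{-a}-\lambda)$. I would therefore carry the $\tfrac{c}{a}$-terms through the whole computation as a separate ledger, check the result against the boundary index $k=2$, and confirm that the final constant and homogeneous coefficients agree with Eq.(\ref{Eq:num18}); any residual mismatch in the $\tfrac{c}{a}$ terms would signal a sign or grouping slip to be resolved at this stage.
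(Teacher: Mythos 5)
Your plan is essentially the paper's own proof: the integrating-factor solution of Eq.~(\ref{Eq:num8}) is precisely the constant-variation argument of Eqs.~(\ref{Eq:num20})--(\ref{Eq:num26}), the unit-step trapezoidal rule applied to $\int_1^t u^{\alpha}e^{au}\,du$ is exactly how the paper converts that integral into the sum appearing in Eq.~(\ref{Eq:num26}), and the restored values are likewise obtained by invoking Eq.~(\ref{Eq:num4}) (the paper simply states this and omits the algebra you spell out). One caveat on the step you rightly identify as delicate: writing $P_{k}=\sum_{\gamma=1}^{k-1}\{\gamma^{\alpha}e^{a(\gamma-1)}+(\gamma+1)^{\alpha}e^{a\gamma}\}$, the difference $\hat{s}_{k+1}^{1}-\lambda\hat{s}_{k}^{1}$ evaluates to $e^{-a(k-1)}(e^{-a}-\lambda)\bigl\{s_{1}^{0}-\tfrac{c}{a}+\tfrac{b}{2}P_{k}\bigr\}+\tfrac{b}{2}\bigl\{k^{\alpha}e^{-a}+(k+1)^{\alpha}\bigr\}+\tfrac{c}{a}(1-\lambda)$, which differs from the printed Eq.~(\ref{Eq:num18}) by $\tfrac{c}{a}(e^{-a}-\lambda)\bigl(1-e^{-a(k-1)}\bigr)$, so the residual mismatch in the $\tfrac{c}{a}$ ledger that you anticipate is real but sits in the theorem's displayed formula (and in its $\lambda=1$ specialization, Eq.~(\ref{Eq:num38})) rather than in your derivation.
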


\begin{proof}
 We all know the general solution of linear non-homogeneous differential equation Eq.(\ref{Eq:num8}) is composed of the solution of its corresponding homogeneous equation plus one of its special solution. For Eq.(\ref{Eq:num8}),
its homogeneous form as:
\begin{equation}
\frac{d s_{t}^{1}}{d t}+ a s_{t}^{1}=0,
\label{Eq:num19}
\end{equation}
then we solve the general solution of Eq.(\ref{Eq:num19}) is  $\ln s_{t}^{1}=-a t+c$. Further,
simplification gives $s_{t}^{1}=Q e^{-a t}$. Taking $t=1$, there is $Q=e^{a t} s_{1}^{1}$, $c$ and $Q$ is a constant.

Using the constant variation method, we obtain the solution of Eq.(\ref{Eq:num8}) as
\begin{equation}
 s_{t}^{1}=Q(t) e^{-a t},
\label{Eq:num20}
\end{equation}

Because of $Q(t)= s_{t}^{1} e^{a t}$ and $s_{t}^{1}=s_{t}^{0}$, then $Q(1)=e^{a} s_{1}^{0}$.
Bringing Eq.(\ref{Eq:num20}) into Eq.(\ref{Eq:num8}), we get
\begin{equation}
\frac{d Q(t)}{d t}=e^{a t}\left(b t^{\alpha}+c\right),
\label{Eq:num22}
\end{equation}

Further, considering the integral of Eq.(\ref{Eq:num22}) on the interval $\left[ 1,t \right]$, we can obtain
\begin{equation}
\int_{1}^{t} d Q(u)=\int_{1}^{t} e^{a u}\left(b u^{\alpha}+c\right) d u,
\label{Eq:num23}
\end{equation}

From Eq.(\ref{Eq:num23}) know that
\begin{equation}
Q(t)-Q(1)=b \int_{1}^{t} u^{\alpha} e^{a u} d u+\frac{c}{a}\left(e^{a t}-e^{a}\right),
\label{Eq:num24}
\end{equation}

Thus, $Q(t)$ can be represented  as
\begin{equation}
\left\{\begin{array}{l}{Q(t)=Q(1)+b \int_{1}^{t} u^{\alpha} e^{a u} d u+\frac{c}{a}\left(e^{a t}-e^{a}\right)} \\
{Q(t)=e^{a t}s_{t}^{1}} \end{array}.\right.
\label{Eq:num25}
\end{equation}

According to Eq.(\ref{Eq:num25}) and the definite integral trapezoidal approximation formula Eq.(\ref{Eq:num6}),
the continuous-time response function $s_{t}^{1}$ can be derived from the following formula,
\begin{equation}
\begin{aligned}
s_{t}^{1} &=\left(Q(1)+b \int_{1}^{t} u^{\alpha} e^{a u} d u+\frac{c}{a}\left(e^{a t}-e^{a}\right)\right)e^{-a t} \\
&=\left(s_{1}^{0}-\frac{c}{a}\right) e^{-a(t-1)}+b e^{-a(t-1)} \int_{1}^{t} u^{\alpha} e^{a(u-1)} d u +\frac{c}{a}\\
&=\left(s_{1}^{0}-\frac{c}{a}\right) e^{-a(t-1)}+\frac{c}{a} \\
&+\frac{b}{2} e^{-a(t-1)} \sum_{\gamma=1}^{t-1} \left\{\gamma^{\alpha} e^{a(\gamma-1)}+(\gamma+1)^{\alpha} e^{a \gamma}\right\}.
\end{aligned}
\label{Eq:num26}
\end{equation}

Finally, the continuous-time response function is challenging to calculate in actual applications\citep{MA2019600}, so we further discrete Eq.(\ref{Eq:num26}) to obtain the time response sequence $\hat{s}_{k}^{1}$ . And the prediction values of $\hat{s}_{k}^{0}$ can be solved by using Eq.(\ref{Eq:num4}).
\end{proof}


\subsection{Generality of the NIPGM(1, 1, ${{t}^{\alpha }}$) model}
 \label{subsec:special-cases}

 The novel grey model is a more extensive model, which combines new information priority accumulation with grey model with time power.
 Fig.\ref{fig:relationship-NIPGM11} displays that the relationship between the novel model and the classical GM(1, 1)model\citep{Deng1982Control}, non-homogeneous exponential NGM(1, 1, $k$) model \citep{Cui2009Novel},
extended non-homogeneous exponential NGM(1, 1, $k$, $c$) model\citep{Wang2014Extended}, and grey GM(1, 1, ${{t}^{\alpha }}$)model with time power\citep{qian2012}.
\begin{figure}[!htbp]
\centering
\includegraphics[height=6.1cm,width=10cm]{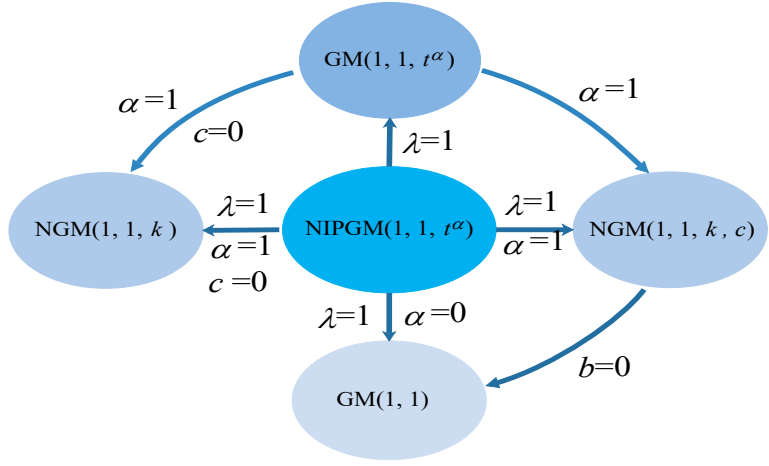}
\caption{The relationship of the novel model to other prediction models.}
 \label{fig:relationship-NIPGM11}
\end{figure}

{\bf \romannumeral1)} If $\alpha \text{=}0$ and $\lambda \text{=}1$, the proposed model becomes $s_{k}^{0}+a h_{k}^{1}=b k^{0}+c=b_{0}$,
which degenerates into the classical
GM(1, 1) model, then

1) The sequence of time response $\hat {s}_{k}^{1}$ as
\begin{equation}
\hat {s}_{k}^{1}=\left({s}_{1}^{1}-\frac{b_{0}}{a}\right) e^{-a (k-1)}+\frac{b_{0}}{a},
\label{Eq:num28}
\end{equation}

2)The values of prediction $\hat{s}_{k+1}^{0}$ as
\begin{equation}
\hat{s}_{k+1}^{0}=\left(1-e^{a}\right)\left({s}_{1}^{0}-\frac{b_{0}}{a}\right) e^{-a k}.
\label{Eq:num29}
\end{equation}

{\bf \romannumeral2)} If $\alpha \text{=1}$ and $\lambda \text{=}1$, the new model
 becomes $s_{k}^{0}+a h_{k}^{1}=b k+c$, which reduces into
 the NGM(1, 1, $k$, $c$) model, then
\begin{equation}
\int_{1}^{t} u e^{a(u-1)} d u=\frac{(a t-1)e^{a(t-1)}+(1-a)}{a^{2}},\nonumber
\end{equation}
there are

1) The sequence of time response $\hat {s}_{k}^{1}$ as
\begin{equation}
\hat {s}_{k}^{1}=\left({s}_{1}^{1}-\frac{a b + a c -b}{a^{2}}\right) e^{-a (k-1)}+\frac{b}{a} (k-1)+\frac{a c-b}{a^{2}},
\label{Eq:num31}
\end{equation}

2)The values of prediction $\hat{s}_{k+1}^{0}$ as
\begin{equation}
\hat {s}_{k+1}^{0}=\left(1-e^{a}\right)\left({s}_{1}^{0}-\frac{a b + a c - b}{a^{2}}\right) e^{-a k}+\frac{b}{a}.
\label{Eq:num32}
\end{equation}

{\bf \romannumeral3)} If $\alpha \text{=1}$,$\lambda \text{=}1 $ and $c=0$, the proposed model
becomes $s_{k}^{0}+a h_{k}^{1}=b k$, which degenerates into the NGM(1, 1, $k$) model,
there are

1) The sequence of time response $\hat {s}_{k}^{1}$ as
\begin{equation}
\hat {s}_{k}^{1}=\left({s}_{1}^{1}+\frac{b-a b}{a^{2}}\right) e^{-a (k-1)}+\frac{b}{a} (k-1)-\frac{b}{a^{2}},
\label{Eq:num34}
\end{equation}

2) The values of prediction $\hat{s}_{k+1}^{0}$ as
\begin{equation}
\hat {s}_{k+1}^{0}=\left(1-e^{a}\right)\left({s}_{1}^{0}+\frac{b-a b}{a^{2}}\right) e^{-a k}+\frac{b}{a}.
\label{Eq:num35}
\end{equation}

{\bf \romannumeral4)} If $\lambda \text{=}1$, the proposed model becomes
 $s_{k}^{0}+a h_{k}^{1}=b k^{2}+c$, which reduces into the GM(1, 1, ${{t}^{\alpha}}$) model with time power, then

1) The sequence of time response $\hat {s}_{k}^{1}$ as
\begin{equation}
\hat {s}_{k}^{1}=\left({s}_{1}^{0}-\frac{c}{a}\right) e^{-a(k-1)}+\frac{b}{2} e^{-a(k-1)} \sum_{\gamma=1}^{k-1} \left\{\gamma^{\alpha} e^{a(\gamma-1)}+(\gamma+1)^{\alpha} e^{a \gamma}\right\} +\frac{c}{a}, \quad  k\in 2,3, \cdots, m,
\label{Eq:num37}
\end{equation}

2) The values of prediction $\hat{s}_{k+1}^{0}$ as
\begin{equation}
\begin{aligned} \hat{s}_{k+1}^{0}=& e^{-a(k-1)}\left(e^{-a}-1\right)\left\{s_{1}^{0}+\frac{b}{2} \sum_{\gamma=1}^{k-1}\left\{\gamma^{\alpha} e^{a(\gamma-1)}+(\gamma+1)^{\alpha} e^{\alpha \gamma}\right\}\right\} \\ &+\frac{b}{2} \left\{k^{\alpha} e^{-a} +(k+1)^{\alpha}\right\}+\frac{c}{a} \left( 1-e^{-a}\right), \quad k\in 2,3, \cdots, m. \end{aligned}
\label{Eq:num38}
\end{equation}

It can be seen from Eq.(\ref{Eq:num29}), Eq.(\ref{Eq:num32}) and Eq.(\ref{Eq:num35}) that the classical GM(1, 1) model, NGM(1, 1, $k$, $c$) model, and NGM(1, 1, $k$) model
are suitable for sequences with non-homogeneous exponential law ${s}_{t}^{0} \approx d e^{at}+p$.
However, both Eq.(\ref{Eq:num18}) and Eq.(\ref{Eq:num38}) are composed of power functions and exponential functions.
The grey model with time power and the novel model can be applied to sequences with
partial exponential features and time power terms ${s}_{t}^{0} \approx d e^{at}+p t^{\alpha}+q$.
The difference between the grey model with time power and the new model is the way of sequence accumulation generation.
The GM(1, 1, ${{t}^{\alpha}}$) model does not successfully apply the new information priority principle.
Therefore, the proposed model is suitable for situations where the data characteristics are complex, which has more excellent  flexibility and practicality than the other four models.

\section{Optimization of the parameters by particle swarm optimization}
\label{sec:model-check}

We can notice that the parameters $\lambda $ and $\alpha $ have been given before building the NIPGM (1,1,${{t}^{\alpha }}$) model.
Choosing the optimal parameters $\lambda $ and $\alpha $ is very important, which can enhance the fitting and predicting capabilities of the proposed model. The details about how to optimize the parameters $\lambda $ and $\alpha $ by the particle swarm optimization algorithm were given in this section.

\subsection{Model evaluation criteria}
 \label{subsec:model-checking}

To examine the accuracy of each forecasting model, we select the absolute percentage error (APE), the root mean square error of priori-sample (RMSEPR)\citep{Ma2017gmc}, the root mean square error of the post-sample (RMSEPO)\citep{Wu2018Application} and the root mean square error (RMSE)\citep{Ma2017gmc} as the assessment standard. These expressions are expressed as
\begin{eqnarray}
&& {\rm APE}(k)=\left|\frac{{s}_{k}^{0}-\hat {s}_{k}^{0}}{{s}_{k}^{0}}\right| \times 100 \%,
\label{Eq:num39}\\
&& {\rm RMSEPR}=\sqrt{\frac{1}{l} \sum_{k=1}^{l}\left({\rm APE}(k)\right)^{2}} \times 100 \%,
\label{Eq:num40}\\
&& {\rm RMSEPO}=\sqrt{\frac{1}{m-l} \sum_{k=l+1}^{m}\left({\rm APE}(k)\right)^{2}} \times 100 \%,
\label{Eq:num41}\\
&& {\rm RMSE}=\sqrt{\frac{1}{m} \sum_{k=1}^{m}\left({\rm APE}(k)\right)^{2}} \times 100 \%.
\label{Eq:num42}
\end{eqnarray}
where, $l$ presents the amount of sample applied to establish forecasting model, $m$ presents the total amount of sample.

\subsection{Nonlinear optimization model for the parameters $\lambda $ and $\alpha $}
 \label{subsec:optimal-parameters}
When using the NIPGM (1,1,${{t}^{\alpha }}$) model to forecast the original data, we first need to determine
the parameters $\lambda $ and $\alpha $ of the model, then use Eq.(\ref{Eq:num13}) to obtain the parameters $(a,b,c)$, and use Eq.(\ref{Eq:num18})
 to solve the prediction values $\hat {s}_{k}^{0}$.
In this paper, the minimum RMSE corresponding $\lambda $ and $\alpha $ are used as the optimal model parameters, and its objective function is as follows:
\begin{eqnarray}
&&\min _{\lambda, \alpha} {\rm RMSE}=\sqrt{\frac{1}{m} \sum_{k=1}^{m} \left(\frac{{s}_{k}^{0}-\hat {s}_{k}^{0}}{{s}_{k}^{0}}\right)} \times 100 \%,
\label{Eq:num43}\\
&& s.t.\left\{ {\begin{array}{*{20}l}
{0 \leq \lambda \leq 1, \alpha \geq 0,} \\
{(a, b, c)^{T}=\left(F^{T} F\right)^{-1} F^{T} G,} \\
{h_{k}^{1}=0.5 s_{k}^{1}+0.5 s_{k-1}^{1},}\\
{F=\left( \begin{matrix}
   -h_{2}^{1} & \frac{{{2}^{1+\alpha }}-1}{1+\alpha } & 1  \\
   -h_{3}^{1} & \frac{{{3}^{1+\alpha }}-{{2}^{1+\alpha }}}{1+\alpha } & 1  \\
   \vdots  & \vdots  & \vdots   \\
   -h_{m}^{1} & \frac{{{m}^{1+\alpha }}-{{(m-1)}^{1+\alpha }}}{1+\alpha } & 1  \\
\end{matrix} \right),
G=\left( \begin{matrix}
   s_{2}^{1}-s_{1}^{1}  \\
   s_{3}^{1}-s_{2}^{1}  \\
   \vdots   \\
   s_{m}^{1}-s_{m-1}^{1}  \\
\end{matrix} \right),}\\
{\hat{s}_{k}^{1}=\left(s_{1}^{0}-\frac{c}{a}\right) e^{-a(k-1)}+\frac{b}{2} e^{-a(k-1)} \sum_{\gamma=1}^{k-1} \left\{\gamma^{\alpha} e^{a(\gamma-1)}+(\gamma+1)^{\alpha} e^{a \gamma}\right\}+\frac{c}{a},}\\
{\hat{s}_{k}^{0}=\hat{s}_{k}^{1}-\lambda \hat{s}_{k-1}^{0},}\\
{k=2,3, \cdots, m}
\end{array}} \right. \nonumber
\end{eqnarray}

Since Eq.(\ref{Eq:num43}) is nonlinear, it is complicated to determine the values of parameters $\lambda$ and $\alpha$  by using Eq.(\ref{Eq:num43}).
For this reason, the optimal parameters $\lambda$ and $\alpha$ can be searched through the mature particle swarm optimization algorithm (PSO).
Inspired by the literature\citep{430837119891001,9841836520141001}, Fig.\ref{fig:figure3} gives the calculation flow chart which can clearly understand the modeling process.
\begin{figure}[!htbp]
\centering
\includegraphics[height=8.5cm,width=14cm]{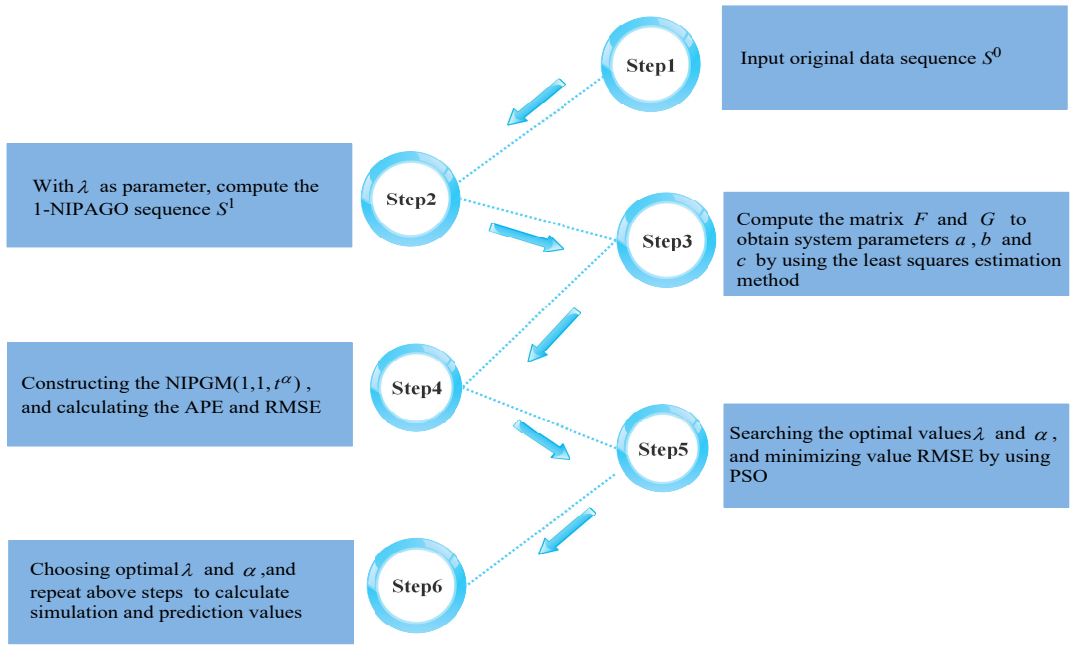}
\caption{Calculation steps for the NIPGM $(1,1,{{t}^{\alpha }})$ model.}
 \label{fig:figure3}
\end{figure}

\subsection{Optimization step of parameters}
 \label{subsec:PSO}

\citet{Shi2002Empirical} developed the particle swarm optimization (PSO) algorithm that originated  from the study of the behavior of bird predators.
This algorithm has many advantages, such as easy implementation, high precision, and rapid convergence. It has paid close attention to a large number of scholars and applied to many engineering fields \citep{Zeng2017optimal,Zeng2018background}.
 The specific algorithm steps will be given below.

 {\bf Step 1}: Set the parameters of learning factor ${{c}_{1}},{{c}_{2}}$, the weight of inertia ${{w}_{\min }}$, ${{w}_{\max }}$, and the maximum number of iterations ${{\rm iter}_{\max}}$.

 {\bf Step 2}: Initialize the particle swarm. Suppose that in an $D$-dimensional target space, there are $n$ candidate particles. The initial
 position $P_{i}$ and velocity $V_{i}$ of the $i$th particle are $P_{i}=\left(P_{i1}, P_{i2}, \ldots, P_{iD}\right),$
  $V_{i}=\left(V_{i1}, V_{i2}, \ldots, V_{iD}\right), i\in 1,2,3, \ldots, n$, respectively.

 {\bf Step 3:} Compute the values of fitness ${\rm RMSE}_{i}^{k}$} of each particle. Because the sum of $P_{i}$ and $V_{i}$ is not necessarily limited to the search space, some particles may exceed the edge of the search space during the particle swarm search process.
In order to settle this issue, we suppose that the penalty factor $M$ is an arbitrarily large constant and the penalty coefficient $C_{j}$
is used to determine whether the particle exceeds the limit value. If the optimal parameter exceeds the limit
value, then $C_{j}=1$, else $C_{j}=0$, there is
\begin{equation}
C_{j}=\left\{\begin{array}{ll}{1,} & {\text { parameter exceeds the limit value }} \\ {0,} & {\text { others }}\end{array}\right.
\label{Eq:costant_fitness}
\end{equation}

Therefore, the fitness of each particle is as follows:
\begin{equation}
{\rm RMSE}_{i}^{k}= {\rm RMSE}\left(P_{i j}^{k}\right)+\sum_{j=1}^{D} C_{j} M,  \quad  i\in 1,2,3, \ldots, n, \quad  j\in1,2,3, \ldots, D,
\label{Eq:fitness}
\end{equation}
where, $k$ presents current iteration.

{\bf Step 4}: Update ${\rm pbest}$ and ${\rm globebest}$. According to the fitness ${\rm RMSE}_{i}^{k}$ of each particle,
finding out the best position it experiences ${{b}_{k}}$, the individual extremum is recorded as $pbset_{i}={{b}_{k}}$.
The optimal position of the entire particle swarm is the global extremum, which recorded as ${\rm globebest}=P_{b_{k}}^{k}$.

(1)If ${\rm RMSE}_{i}^{k+1}<{\rm RMSE}_{i}^{k}$, then ${\rm pbest}_{i}^{k+1}=P _{i}^{k+1}$, else ${\rm pbest}_{i}^{k+1}={\rm pbest}_{i}^{k}$, and find
the optimal particle record as ${{b}_{k+1}}$.

(2)If ${\rm RMSE}_{{{b}_{k+1}}}^{k+1}<{\rm RMSE}_{{{b}_{k}}}^{k}$, then ${\rm globebest}^{k+1}={\rm pbest}_{{{b}_{k+1}}}^{k+1}$, else ${\rm globebest}^{k+1}={\rm globebest}^{k}$.

{\bf Step 5}: Update  $V_{ij}^{k+1}$ and  $P_{ij}^{k+1}$ according to the following mathematical expression:
\begin{eqnarray}
&& V_{ij}^{k+1}=V_{ij}^k\times w  + c_1 \times rand()\times
\left({\rm pbest}_{ij}^k-P_{ij}^k \right)
+ c_2 \times
rand()\times
\left({\rm globebest}_{j}^k -P_{ij}^k \right),
 \label{Eq:PSO-velocity} \\
&&P_{ij}^{k+1}=V_{ij}^{k+1} + P_{ij}^k,
 \label{Eq:PSO-position}\\
&& w=w_{\max}- \frac{w_{\max} -w_{\min}}{{\rm iter}_{\max}}\times k,
 \label{Eq:PSO-inertia-factor}
\end{eqnarray}
where, the inertia weight $w$ is in the interval $\left( {{w}_{\min }},{{w}_{\max }} \right)$, ${{c}_{1}}$ and ${{c}_{2}}$ are constant, and the symbol $rand()$ generates random number in $[0,1]$.

{\bf Step 6}: Stop the guidelines. If $k<{{\rm iter}_{\max}}$, continue to iterate back to step 3, else print optimum solution.


\section{Validation of the NIPGM$(1,1,{{t}^{\alpha }})$ model}
\label{sec:validation}
We present three real examples to examine the accuracy of the novel model, the prediction results are compared with four current grey forecasting models that are classical GM(1, 1)model, discrete DGM (1, 1)model, NGM(1, 1, $k$, $c)$ model, and GM(1, 1, ${{t}^{\alpha }})$ model.
Besides, we further assess the prediction accuracy of the novel model by comparing with other famous forecasting models including polynomial
regression model (PR)\citep{Zhou2018An}, autoregressive integrated moving average model (ARIMA)\citep{Jiang2018ARIMA}.
In the PR$(n)$ model, ``$n$" is the number of polynomial regressions.
In the ARIMA($p,d,q$)model, ``$p$ " presents the number of autoregressive terms, ``$d$ " presents the order of differences, and ``$q$ " presents the number of moving averages.

The unknown parameter $\lambda $ of grey model with time power and parameters $\lambda $, $\alpha $ of the proposed model are determined through the PSO algorithm. Table \ref{table:settings} gives the various parameters setting of the particle swarm algorithm. The computational results of the model optimization parameters $\lambda$, $\alpha$, and RMSE are stored in a matrix with a dimension of $100\times 1000$. Because each trail has an optimal value of $\lambda $, $\alpha $, and RMSE,
there are 100 optimal values of $\lambda $, $\alpha $, and RMSE. In this paper, the minimum RMSE and corresponding $\lambda $ and $\alpha $ of
the 100 parameters are selected as the system parameters of the model.
 \begin{table}[!htbp]
\caption{The parameters setting of PSO algorithm}
 \label{table:settings}
 \vspace{6pt}
 \renewcommand{\baselinestretch}{1.25}
 {\footnotesize\centerline{\tabcolsep=9pt
 \begin{tabular}{cccccccccccccccc}
 \hline
$c_{1}$  &&$c_{2}$    &&$w$   &&$n$  && ${{\rm iter}_{\max}}$   &&$M$ \\
 \hline
 2  && 2   &&0.6  &&100 &&1000  &&10000\\
\hline
\end{tabular} }}
\end{table}
\subsection{ Case1: Forecasting energy consumption of Jiangsu province}
\label{subsec:case1}
The testing data is derived from the literature\citep{Zhou2017New} in this case. We use scientific computing software Matlab to calculate
100 identical trails, and the numerical results are displayed in Table \ref{table:table1}. We see that the minimum
values for RMSE and the relevant $\alpha $ of the GM(1, 1, ${{t}^{\alpha }})$ model are 4.2431\%, 4.7427, the minimum values
for RMSE and the corresponding $\lambda $ and $\alpha $ of NIPGM$(1,1,{{t}^{\alpha }})$ model are 0.012\%, 0.3583, 0.6157,
respectively.
\begin{table}[!htbp]
\caption{ The optimal parameters values of GM(1, 1, ${{t}^{\alpha }})$ model and the novel model in energy consumption of Jiangsu Province.}
 \label{table:table1}
 \vspace{6pt}
 \renewcommand{\baselinestretch}{1.25}
 {\footnotesize\centerline{\tabcolsep=9pt
 \begin{tabular}{ccccccc}
 \hline
     \multicolumn{3}{c}{NIPGM(1, 1, ${{t}^{\alpha }})$} && \multicolumn{2}{c}{GM(1, 1, ${{t}^{\alpha }})$} \\
    \hline
  RMSE(\%)      & $\lambda $      &$\alpha $      && RMSE(\%)    & $\alpha $      \\
 \hline
{{\bf0.0120}}   &0.3583    &0.6157 && 4.2431 & 4.7427   \\
\hline
\end{tabular} }}
\end{table}

The prediction values of the NIPGM$(1,1,{{t}^{\alpha }})$ and GM$(1, 1,{{t}^{\alpha }})$ can be immediately obtained
when determining the optimal parameters. Then, we calculate all the numerical results of seven models
by using Matlab software, which are displayed in Table \ref{table:table2} and Table \ref{table:table3}.
\begin{table}[!htbp]
\caption{The fitted and predicted values for different models in energy consumption of Jiangsu Province(10000 tons of standard coal).}
 \label{table:table2}
 \vspace{6pt}
 \renewcommand{\baselinestretch}{1.25}
 {\footnotesize\centerline{\tabcolsep=3pt
 \begin{tabular}{ccccccccccccc}
 \hline
       Year      & value     &PR(2)   &ARIMA(2, 1, 0)  & GM(1, 1)	& DGM(1, 1)	 & NGM(1, 1, $k$, $c)$	 & GM(1, 1, ${{t}^{\alpha }})$	 & NIPGM(1, 1, ${{t}^{\alpha }})$\\
\hline
2001	& 8881   &8224.8818 	&8881.0000  & 8881.0000	&8881.0000	&8881.0000	&8881.0000	&8881.0000\\
2002	& 9593	 &10258.6212	&9593.0000  & 11349.6361	&11361.2997	&9622.0537	&10277.1738	&9590.5191\\
2003	& 11950	 &12275.3682	&11950.0000  &12644.2775	&12658.6033	&11984.7933	&11948.6092	&11949.8220\\
2004	& 14207  &14275.1227	&14206.9611   &14086.5972	&14104.0410	&14238.3553	&13860.9157	&14204.8033\\
2005	& 16360	 &16257.8848	&16359.2587   &15693.4409	&15714.5278	&16387.7846	&16008.5593	&16357.5988\\
2006	& 18412	 &18223.6545	&18413.0391   &17483.5756	&17508.9100	&18437.8931	&18356.5814	&18411.1219\\
2007	& 20369	 &20172.4318	&20368.7184   &19477.9090	&19508.1858	&20393.2700	&20829.7995	&20369.0361\\
2008	& 22235	 &22104.2167 	&22235.1366   &21699.7341	&21735.7513	&22258.2928	&23300.5917	&22235.4601\\
2009	& 24010	 &24019.0091 	&24014.3595   &24175.0005	&24217.6741	&24037.1366	&25574.9477	&24014.7434\\
2010	& 25711	 &25916.8091 	&25702.5611    &26932.6180	&26982.9982	&25733.7834	&27376.4503	&25711.3159\\[5pt]
2011	& 27329	 &27797.6167 	&27333.0685    &30004.7940	&30064.0841	&27352.0316	&28327.8106	&27329.5865\\
2012	& 28872	 &29661.4318 	&28871.8526    &33427.4100	&33496.9875	&28895.5037	&27929.5304	&28873.8775\\
\hline
\end{tabular} }}
\end{table}

 It is easy to see in Table \ref{table:table3} that
the RMSEPR of ARIMA(2,1,0) model is 0.0127. However, it is noteworthy that the RMSEPO, RMSE of NIPGM (1,1,${{t}^{\alpha }}$) model are
 0.0048,0.0120.
 The numerical results demonstrate that grey NIPGM (1,1,${{t}^{\alpha }}$) model
 has the highest forecasting accuracy in seven forecasting models.
 The reason is that the growth of the sequence shows a trend of slowing down and then accelerating, which conforms to the characteristics of the new information priority accumulation.
 Therefore, using the proposed model to simulate and predict this sequence has higher prediction accuracy.
\begin{table}[!htbp]
\caption{ Errors of different models in energy consumption of Jiangsu Province(\%).}
 \label{table:table3}
 \vspace{6pt}
 \renewcommand{\baselinestretch}{1.25}
 {\footnotesize\centerline{\tabcolsep=5pt
 \begin{tabular}{ccccccccccccc}
 \hline
       Year    &PR(2)    &ARIMA(2, 1, 0)  & GM(1, 1)	& DGM(1, 1)	 & NGM(1, 1, $k$, $c)$	 & GM(1, 1, ${{t}^{\alpha }})$	 & NIPGM(1, 1, ${{t}^{\alpha }})$\\
\hline
2001	&7.3879 	&0.0000  &0.0000	     &0.0000	&0.0000	&0.0000	&0.0000\\
2002	&6.9386 	&0.0000  &18.3116	&18.4332	&0.3029	&7.1320	&0.0259\\
2003	&2.7227 	&0.0000  &5.8099	&5.9297	&0.2912	&0.0116	&0.0015\\
2004	&0.4795 	&0.0003  &0.8475	&0.7247	&0.2207	&2.4360	&0.0155\\
2005	&0.6242 	&0.0045  &4.0743	&3.9454	&0.1698	&2.1482	&0.0147\\
2006	&1.0229 	&0.0056  &5.0425	&4.9049	&0.1406	&0.3010	&0.0048\\
2007	&0.9650 	&0.0014  &4.3747	&4.2261	&0.1192	&2.2623	&0.0002\\
2008	&0.5882 	&0.0006  &2.4073	&2.2453	&0.1048	&4.7924	&0.0021\\
2009	&0.0375 	&0.0182  &0.6872	&0.8649	&0.1130	&6.5179	&0.0198\\
2010	&0.8005 	&0.0328  &4.7513	&4.9473	&0.0886	&6.4776	&0.0012\\[5pt]						
2011	&1.7147 	&0.0149  &9.7910	 &10.0080	&0.0843	&3.6548	&0.0021\\
2012	&2.7342 	&0.0005  &15.7780    &16.0189  &0.0814	&3.2643	&0.0065\\[5pt]
RMSEPR	&2.5635 	&{\bf  \underline{0.0127}}   &7.1476	 &7.1742	&0.1885	&4.3974	 &0.0131\\
RMSEPO	&2.2821 	&0.0105   &13.1303 &13.3560	&0.0829	&3.4650	 &{\bf \underline{0.0048}}\\
RMSE	&2.5147 	&0.0124   &8.5525	 &8.6339	 &0.1741 &4.2431	&{\bf \underline{0.0120}}\\
\hline
\end{tabular} }}
\end{table}


\subsection{ Case2: Forecasting the output values of the high technology industry}
\label{subsec:case2}
The sample data is taken from the literature\citep{Ding2017Multi} in this case. Firstly, establishing the grey prediction model by using the data
from 2005 to 2012 and testing the forecasting accuracy of grey models by using the data
from 2013 to 2014. Then, we calculate 100 identical trails by using Matlab software, and Table \ref{table:table4} gives the optimal numerical results.
As can be observed from Table \ref{table:table4}, the minimum values for RMSE and the relevant $\alpha $ of grey model with time power are
3.7990\%, and 1.7464. The minimum values for RMSE and the relevant $\lambda $ and $\alpha $ of NIPGM(1, 1, ${{t}^{\alpha }})$
model are 3.3509\%, 0.9741 and 1.6850, respectively.
\begin{table}[!htbp]
\caption{ The optimal parameters values of GM(1, 1, ${{t}^{\alpha }})$ model and the novel model in the output values of the high technology industry.}
 \label{table:table4}
 \vspace{6pt}
 \renewcommand{\baselinestretch}{1.25}
 {\footnotesize\centerline{\tabcolsep=9pt
 \begin{tabular}{ccccccc}
  \hline
     \multicolumn{3}{c}{NIPGM(1, 1, ${{t}^{\alpha }})$} && \multicolumn{2}{c}{GM(1, 1, ${{t}^{\alpha }})$} \\
    \hline
  RMSE(\%)      & $\lambda $      &$\alpha $      && RMSE(\%)    & $\alpha $      \\
 \hline
{{\bf3.3509}}   &0.9741   &1.6850 && 3.7990 & 1.7464   \\
\hline
\end{tabular} }}
\end{table}

Once the optimal parameters are derived, the numerical results of the grey model with time power
and the novel model can be directly calculated.
Table \ref{table:table5} and  Table \ref{table:table6} list
the calculation results of seven models. We can observe from Table \ref{table:table6},
the RMSEPR of the extended non-homogeneous exponential grey
model is 2.8812, and the RMSEPO, RMSE of the new model are
2.4994, 3.3509. The numerical results reveal that the novel model has more excellent forecasting capability than
other prediction models. The reason is that the growing tendency of the sequence shows a trend of acceleration, then deceleration,
and acceleration again, has great volatility and conforms to the characteristics of the new information prioritized.
 Furthermore, only 10 data with non-stationary
 may be the reason that the PR($n$) model and ARIMA($p,d,q$) model have worse prediction results.
 So using the NIPGM$(1,1,{{t}^{\alpha }})$ model to stimulate and predict this sequence has higher prediction accuracy.
\begin{table}[!htbp]
\caption{The fitted and predicted values of different models in the output values of the high technology industry(One trillion yuan).}
 \label{table:table5}
 \vspace{6pt}
 \renewcommand{\baselinestretch}{1.25}
 {\footnotesize\centerline{\tabcolsep=4pt
 \begin{tabular}{ccccccccccccc}
 \hline
 Year  & value  &PR(2)    &ARIMA(1, 1, 1)  & GM(1, 1)	& DGM(1, 1)	 & NGM(1, 1, $k$, $c)$	 & GM(1, 1, ${{t}^{\alpha }})$	 & NIPGM(1, 1, ${{t}^{\alpha }})$\\
\hline
2005	&3.39	&3.5883     &3.3900    &3.3900	&3.3900	    &3.3900	    &3.3900  	&3.3900\\
2006	&4.16	&4.0483     &4.1346    &4.0566	&4.0650	    &4.2669	    &4.0401	    &4.1934\\
2007    &4.97	&4.6671     &4.9564    &4.7205	&4.7317	    &4.7987 	&4.6819	    &4.7482\\
2008	&5.57	&5.4448     &5.8125    &5.4932	&5.5077   	&5.4628  	&5.4379	    &5.5057\\
2009	&5.96	&6.3812     &6.2863    &6.3922	&6.4111	    &6.2920   	&6.3226	    &6.4382\\
2010	&7.45	&7.4764     &6.4872    &7.4385	&7.4626	    &7.3276	    &7.3535	    &7.5267\\
2011	&8.75	&8.7305     &8.6574    &8.6560	&8.6866	    &8.6209	    &8.5505	    &8.7567\\
2012	&10.23	&10.1433    &10.0763   &10.0727	&10.1113	&10.2358	&9.9369	    &10.1168\\[5pt]
2013	&11.60	&11.7150    &11.7224   &11.7214	&11.7697	&12.2525	&11.5393	&11.5974\\
2014	&12.74	&13.4455    &13.0781   &13.6399	&13.7000	&14.7710	&13.3884	&13.1903\\
\hline
\end{tabular} }}
\end{table}

\begin{table}[!htbp]
\caption{Errors of different models in the output values of the high technology industry(\%).}
 \label{table:table6}
 \vspace{6pt}
 \renewcommand{\baselinestretch}{1.25}
 {\footnotesize\centerline{\tabcolsep=5pt
 \begin{tabular}{ccccccccccccc}
 \hline
Year    &PR(2)    &ARIMA(1, 1, 1)  & GM(1, 1)	& DGM(1, 1)	 & NGM(1, 1, $k$, $c)$	 & GM(1, 1, ${{t}^{\alpha }})$	 & NIPGM(1, 1, ${{t}^{\alpha }})$\\
\hline
2005	&5.8505 	&0.0000  &0.0000	&0.0000	&0.0000	&0.0000	&0.0000\\
2006	&2.6843 	&0.6100  &2.4864	&2.2845	&2.5706	&2.8812	&0.8037\\
2007	&6.0937	    &0.2743  &5.0197	&4.7952	&3.4465	&5.7974	&4.4626\\
2008	&2.2484 	&4.3538  &1.3797	&1.1178	&1.9251	&2.3725	&1.1541\\
2009	&7.0670 	&5.4752  &7.2524	&7.5685	&5.5713	&6.0843	&8.0236\\
2010	&0.3547 	&12.9233 &0.1545	&0.1689	&1.6425	&1.2957	&1.0293\\
2011	&0.2231 	&1.0587  &1.0745	&0.7251	&1.4759	&2.2802	&0.0768\\
2012	&0.8472  	&1.5021  &1.5371	&1.1607	&0.0568	&2.8655	&1.1065\\[5pt]						
2013	&0.9914 	&1.0550  &1.0466	&1.4625	&5.6253	&0.5234	&0.0227\\
2014	&5.5375 	&2.6538  &7.0636	&7.5356	&15.9419 &5.0893	&3.5346\\[5pt]	
RMSEPR	&3.7840 	&5.6032   &3.5741	&3.5586	&{\bf \underline{2.8812}}	&3.7730	 &3.5569\\
RMSEPO	&3.9779 	&3.0643   &5.0492	&5.4279	&11.9538 &3.6176	&{\bf \underline{2.4994}}\\
RMSE	&3.8279 	&5.0324    &3.9498	&4.0493	&6.1815	&3.7390	  &{\bf \underline{3.3509}}\\
\hline
\end{tabular} }}
\end{table}

\newpage
\subsection{ Case3: Forecasting China's grain production}
\label{subsec:case3}
The sample data is collected from the literature in this case\citep{Ding2018Modeling}. We build forecasting models by employing the data from 2003 to 2012 and assess the prediction capability by employing the data from 2013 to 2015. Similar to
 case 1 and case 2, Table \ref{table:table7} displays the optimal numerical results.
\begin{table}[!htbp]
\caption{ The optimal parameters values of GM(1, 1, ${{t}^{\alpha }})$ model and the novel model in China's grain production.}
 \label{table:table7}
 \vspace{6pt}
 \renewcommand{\baselinestretch}{1.25}
 {\footnotesize\centerline{\tabcolsep=9pt
 \begin{tabular}{ccccccc}
 \hline
     \multicolumn{3}{c}{NIPGM(1, 1, ${{t}^{\alpha }})$} && \multicolumn{2}{c}{GM(1, 1, ${{t}^{\alpha }})$} \\
    \hline
  RMSE(\%)      & $\lambda $      &$\alpha $      && RMSE(\%)    & $\alpha $      \\
 \hline
{{\bf 0.9957}}   &0.8965   &0.0045 && 1.0867 & 1.1714    \\
\hline
\end{tabular} }}
\end{table}

Furthermore, Table \ref{table:table8} and  Table \ref{table:table9} list all calculation results, we can observe the RMSEPR of the extended non-homogeneous exponential grey model is 0.8371, and the RMSEPO, RMSE of the proposed model are 0.6902, 0.9957. The numerical results reveal that the NIPGM (1, 1,${{t}^{\alpha }}$) model has more accurate stimulation and prediction
 accuracy. Because the growing tendency of the sequence has great volatility and the sequence has new characteristic behavior.
\begin{table}[!htbp]
\caption{The fitted and predicted values of different models in China's grain production(10000 tons). }
 \label{table:table8}
 \vspace{6pt}
 \renewcommand{\baselinestretch}{1.25}
 {\footnotesize\centerline{\tabcolsep=3pt
 \begin{tabular}{cccccccccc}
 \hline
Year  & value  &PR(2)    &ARIMA(2, 1, 1)  & GM(1, 1)	& DGM(1, 1)	 & NGM(1, 1, $k$, $c)$	 & GM(1, 1, ${{t}^{\alpha }})$	 & NIPGM(1, 1, ${{t}^{\alpha }})$\\
\hline
2003	&43069.50	&44309.6857 	&43069.5000  &43069.5000	&43069.5000	&43069.5000	&43069.5000	&43069.5000\\
2004	&46946.90	&45980.3040 	&46946.9000  &46813.7925	&46816.8336	&47263.8615	&46685.2880	&47382.7440\\
2005	&48402.20	&47624.1546 	&48066.8211  &48129.9249	&48133.1370	&48308.5300	&47970.6232	&47933.2230\\
2006	&49804.20	&49241.2375 	&49803.3114  &49483.0592	&49486.4496	&49450.0971	&49305.8670	&49167.2512\\
2007	&50160.28	&50831.5528 	&51210.9062  &50874.2359	&50877.8119	&50697.5506	&50686.0227	&50631.5469\\
2008	&52870.92	&52395.1004 	&51688.5287  &52304.5244	&52308.2939	&52060.7120	&52108.7434	&52179.9439\\
2009	&53082.08	&53931.8804 	&54124.0336  &53775.0243	&53778.9953	&53550.3139	&53572.8101	&53748.7258\\
2010	&54647.71	&55441.8927 	&54625.9259  &55286.8661	&55291.0470	&55178.0842	&55077.6047	&55305.8171\\
2011	&57120.80	&56925.1374 	&56033.2806  &56841.2122	&56845.6116	&56956.8388	&56622.8742	&56833.7626\\
2012	&58957.97	&58381.6145 	&58399.7137  &58439.2575	&58443.8843	&58900.5823	&58208.6070	&58322.6967\\[5pt]
2013 	&60193.84	&59811.3238 	&60310.0801  &60082.2305	&60087.0942	&61024.6184	&59834.9627	&59767.0199\\
2014 	&60702.60	&61214.2656 	&61615.3786  &61771.3944	&61776.5046	&63345.6700	&61502.2275	&61163.6716\\
2015 	&62143.90	&62590.4396 	&62208.5203  &63508.0478	&63513.4145	&65882.0115	&63210.7865	&62511.1658\\
\hline
\end{tabular} }}
\end{table}
\begin{table}[!htbp]
\caption{The errors of China's grain production by different models(\%). }
 \label{table:table9}
 \vspace{6pt}
 \renewcommand{\baselinestretch}{1.25}
 {\footnotesize\centerline{\tabcolsep=5pt
 \begin{tabular}{ccccccccccccc}
 \hline
 Year   &PR(2)   &ARIMA(2, 1, 1)  & GM(1, 1)	& DGM(1, 1)	 & NGM(1, 1, $k$, $c)$	 & GM(1, 1, ${{t}^{\alpha }})$	 & NIPGM(1, 1, ${{t}^{\alpha }})$\\
\hline
2003	&2.8795 	&0.0000  &0.0000	&0.0000	&0.0000	&0.0000	&0.0000\\
2004	&2.0589 	&0.0000  &0.2835	&0.2771	&0.6751	&0.5573	&0.9284\\
2005	&1.6075 	&0.6929  &0.5625	&0.5559	&0.1935	&0.8916	&0.9689\\
2006	&1.1304 	&0.0018  &0.6448	&0.6380	&0.7110	&1.0006	&1.2789\\
2007	&1.3383 	&2.0945  &1.4233	&1.4305	&1.0711	&1.0481	&0.9395\\
2008	&0.9000 	&2.2364  &1.0713	&1.0642	&1.5324	&1.4416	&1.3069\\
2009	&0.6009	    &1.9629  &1.3054	&1.3129	&0.8821	&0.9245	&1.2559\\
2010	&1.4533 	&0.0399  &1.1696	&1.1772	&0.9705	&0.7867	&1.2043\\
2011	&0.3425 	&1.9039  &0.4895	&0.4818	&0.2870	&0.8717	&0.5025\\
2012    &0.9776 	&0.9469  &0.8798   &0.8720 &0.0973  &1.2710  &1.0775\\[5pt]
2013	&0.6355 	&0.1931  &0.1854	&0.1773	&1.3802	&0.5962	&0.7091\\
2014	&0.8429 	&1.5037  &1.7607	&1.7691	&4.3541	&1.3173	&0.7596\\
2015	&0.7186 	&0.1040  &2.1951	&2.2038	&6.0153	&1.7168	&0.5910\\[5pt]
RMSEPR	&1.3519 	&1.4238  &0.9470	&0.9471	&{\bf \underline{0.8371}}	&1.0073	&1.0785\\
RMSEPO	&0.7373 	&0.8773  &1.6282	&1.6348	&4.3607	&1.2959	&{\bf \underline{0.6902}}\\
RMSE	&1.2275 	&1.3087  &1.1556	&1.1580	&2.2977	&1.0867	&{\bf \underline{0.9957}}\\
\hline
\end{tabular} }}
\end{table}

\subsection{Summary of the case studies}
Based on the results of the three case studies, we summarize the performance of simulation and prediction through seven forecasting models.
Table \ref{table:rank} gives the average values and ranks of RMSEPR, RMSEPO, and RMSE for seven models in all the cases.
It is easy to see that the improved grey model outperforms other models and has better generalization capabilities.
\begin{table}[!htbp]
\caption{The average values and ranks of RMSEPR, RMSEPO and RMSE for seven models in all the cases. }
 \label{table:rank}
 \vspace{6pt}
 \renewcommand{\baselinestretch}{1.25}
 {\footnotesize\centerline{\tabcolsep=4pt
 \begin{tabular}{lcccccccccccc}
 \hline
               &PR($n$)    &ARIMA($p$,$d$,$q$)   & GM(1, 1)	& DGM(1, 1)	 & NGM(1, 1, $k$, $c)$	 & GM(1, 1, ${{t}^{\alpha }})$	 & NIPGM(1, 1, ${{t}^{\alpha }})$\\
\hline
Average RMSEPR	&2.5665 	&2.3466 	&3.8896 	&3.8933 	&{\bf \underline{1.3023}} 	  &3.0592 	   &1.5495 \\
Simulation rank	        &4  &3    & 6	        &7	        & {\bf \underline{1}}	   & 5	       &  2\\
Average RMSEPO 	&2.3324 	&1.3174 	&6.6026 	&6.8062 	&5.4658 	      &0.7929 	    &{\bf \underline{1.0648}}\\
Prediction rank	  &3	   &2	       &6	        &7	          &5	            &4	        & {\bf \underline{1}}\\
Average RMSE 	&2.5234 	&2.1178 	&4.5527 	&4.6137 	&2.8844 	       &3.0229 	    &{\bf \underline{1.4529}}\\
Overall rank	&3	&2	    &6	        & 7	        & 4	           &5	        & {\bf \underline{1}}\\
\hline
\end{tabular} }}
\end{table}

On the one hand, the proposed model is compared with PR($n$) model and ARIMA($p,d,q$) model. It reveals that NIPGM(1, 1, ${{t}^{\alpha }})$ model has the most excellent capabilities of simulation and forecasting and has significant advantages for small samples.
The reason is that the ARIMA model requires extensive samples of more than 50 observations for prediction \citep{Shumway2017ARIMA}.
On the other hand, the RMSE of the NIPGM$(1, 1,{{t}^{\alpha }})$ model has been dramatically improved than the basic grey model
and discrete grey model. The result indicates that the introduction of the parameters $\alpha $ and $\lambda $ into the grey model is a
scientific and practical approach to heighten the forecasting capability of grey forecasting models.
 Then, we compare the accuracy of the NGM(1, 1, $k$, $c$) model, GM(1, 1, ${{t}^{\alpha }})$ model
 and novel grey model, which shows that new grey model has the highest prediction accuracy and the second-highest simulation accuracy. This illustrates the importance of new information priority accumulation for forecasting, and grey model with time power and non-homogeneous exponential grey model are particular cases of the proposed model.
Besides, the PSO algorithm is very stable when determining the parameters $\lambda $ and $\alpha $.
This is why we select the PSO to seek the optimal parameters of the new grey model.

\section{ Applications in the wind turbine capacity}
 \label{sec:applications}
 This section discusses the wind turbine capacity of the world and the top three regions, which are Asia, Europe, and North America, respectively.
First of all, Table \ref{table:totaldata} lists the data of the wind turbine capacity from 2007 to 2017, which gathers from the {\it Statistical Review of World Energy 2018}. Secondly, the data is split into two parts,
the data from 2007 to 2014 is used to build the PR($n$) model, time series  model, and five grey models, including GM(1, 1) model, DGM (1, 1) model, NGM(1, 1, $k)$ model, GM(1, 1, ${{t}^{\alpha }})$ model, and NIPGM(1, 1, ${{t}^{\alpha }})$ model.
 Thirdly, the data from 2015 to 2017 is applied to assess the capability  of the fitting and forecasting. Then, we use Matlab software to compute results. Finally, Fig. \ref{fig:figure10} displays the structure chart of
forecasting the wind turbine capacity, which is drawn under the inspiration of
literature\citep{Ma2019economic,Du2019hybrid}.
\begin{table}[!htbp]
\caption{ The total wind turbine capacity of Europe, North America, Asia, and the world(Megawatts). }
 \label{table:totaldata}
 \vspace{6pt}
 \renewcommand{\baselinestretch}{1.25}
 {\footnotesize\centerline{\tabcolsep=3pt
 \begin{tabular}{ccccccccccccc}
 \hline
   Year   & Europe	&North America	 &Asia	 & World  \\
\hline
2007 	&56748.8850 	&18810.0000 	&15327.3260 	&91894.0080\\
2008 	&64943.4830 	&27940.0000 	&22356.3570     &116511.6230\\
2009 	&77019.9934 	&38933.0000 	&33737.5070 	&151655.8934\\
2010 	&86721.9742 	&45054.0000 	&48622.3270 	&182901.3012\\
2011 	&96603.1278 	&53485.0000 	&69073.8140 	&222516.8618\\
2012 	&109884.8729 	&67934.0000 	&87572.6850 	&269853.3279\\
2013 	&120994.6758 	&71093.0000 	&105496.3320 	&303112.5198\\
2014 	&133915.4447 	&78340.0000 	&129273.7820 	&351617.6747\\[3pt]
2015 	&147637.6457 	&87058.4200 	&167528.3270 	&417144.1127\\
2016 	&161939.8681 	&96994.0000 	&189684.6370 	&467698.4951\\
2017 	&178314.1463 	&104070.0000 	&209977.2340 	&514798.1313\\
\hline
\end{tabular} }}
\end{table}
\begin{figure}[!htbp]
\centering
\includegraphics[height=10cm,width=17cm]{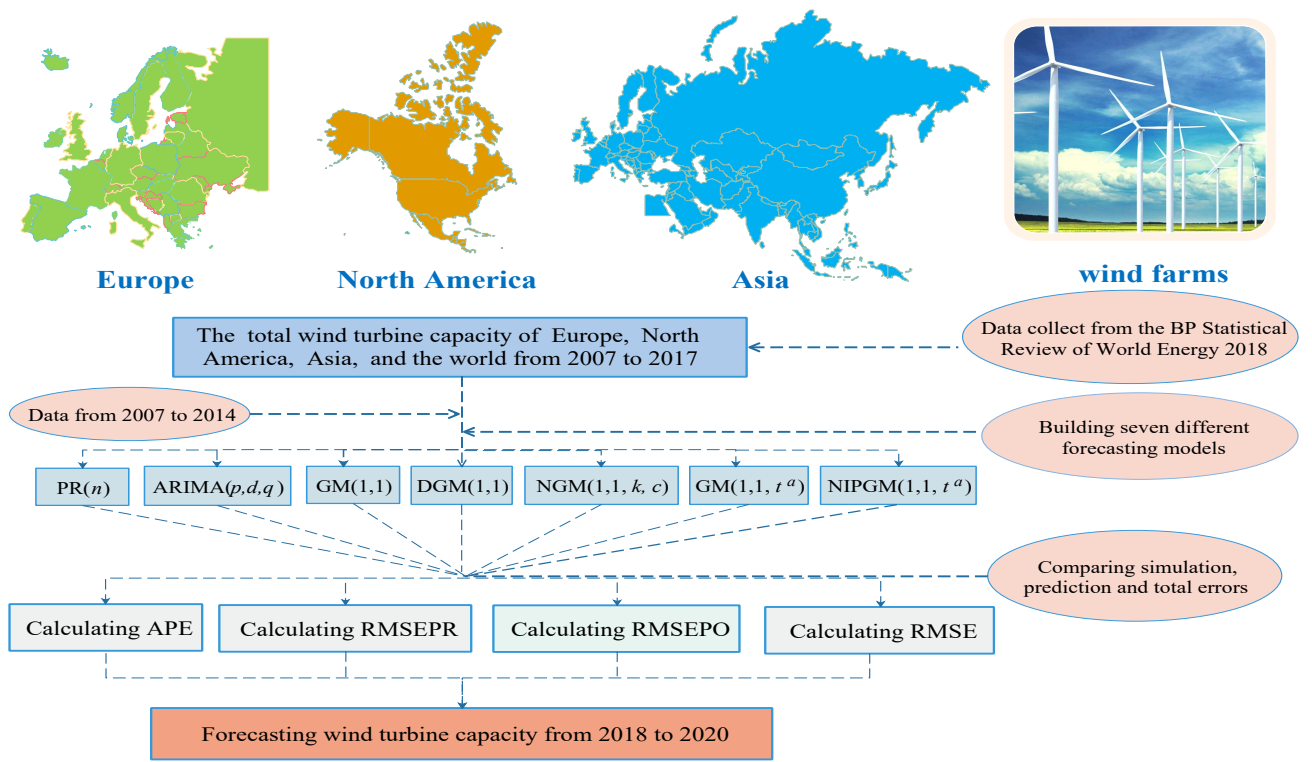}
\caption{The structure chart of forecasting wind turbine capacity.}
 \label{fig:figure10}
\end{figure}
\subsection{Total wind turbine capacity of Europe}
 \label{subsec:Europe}
This subsection analyzes the total wind turbine capacity of Europe through seven prediction models.
First of all, through the demonstration in section 4, we use PSO algorithm to find the minimum RMSE and the corresponding $\lambda $, $\alpha $
of NIPGM $(1,1,{{t}^{\alpha }})$ model, the smallest RMSE and
the corresponding $\alpha $ of grey model with time power. Then, Table \ref{table:table10} displays the minimum RMSE and the relevant optimal values of the two models.
\begin{table}[!htbp]
\caption{ The optimal parameters values of GM(1, 1, ${{t}^{\alpha }})$ model and the novel model in wind turbine capacity of Europe.}
 \label{table:table10}
 \vspace{6pt}
 \renewcommand{\baselinestretch}{1.25}
 {\footnotesize\centerline{\tabcolsep=9pt
 \begin{tabular}{ccccccc}
  \hline
     \multicolumn{3}{c}{NIPGM(1, 1, ${{t}^{\alpha }})$} && \multicolumn{2}{c}{GM(1, 1, ${{t}^{\alpha }})$} \\
    \hline
  RMSE(\%)      & $\lambda $      &$\alpha $      && RMSE(\%)    & $\alpha $      \\
 \hline
{{\bf 0.3799}}   &0.9649    &0.0206   && 1.6360 & 3.6598    \\
\hline
\end{tabular} }}
\end{table}

Further, five grey models are respectively constructed by using grey theory and the row data of the wind turbine capacity from 2007 to 2014.

$\bullet$ GM(1, 1) model

We can immediately get $(a,b)=(-0.1148,57660.2383)$ of the basic grey model by using the least-squares method. Thus the whitening equation is established, there is
\begin{equation}
\frac{d s_{t}^{1}}{d t}-0.1148 s_{t}^{1}=57660.2383.
\label{Eq:num47}
\end{equation}

$\bullet$ DGM(1, 1) model

We can get $({{\beta }_{1}},{{\beta }_{2}})=(1.1218,61171.4813)$ of DGM (1, 1)model.
Then the time response sequence is obtained, these is
\begin{equation}
\hat s_{k}^{1}=1.1218^{k} s_{k}^{0}-\frac{1-1.1218^{k}}{0.1218} \times 61171.4813, k=1,2, \cdots, m.
\label{Eq:num48}
\end{equation}

$\bullet$ NGM(1, 1, $k$, $c)$ model

We directly derive $(a,b,c)=(-0.0384,7583.1441$,$50768.5848)$ of the extended non-homogeneous exponential grey model.
And the whitening equation is established, then
\begin{equation}
\frac{d s_{t}^{1}}{d t}-0.0384 s_{t}^{1}=7583.1441 t+50768.5848.
\label{Eq:num49}
\end{equation}

$\bullet$ GM$(1, 1,{{t}^{\alpha }})$ model

We can observe from Table \ref{table:table10} that the optimal parameter $\alpha =3.6598$ of grey model with time power.
And applying the least-squares method, we can get $(a,b,c)=(-0.1396,-7.4040,53659.7544)$. Thus, the whitening equation
 is put forward, there is
\begin{equation}
\frac{d s_{t}^{1}}{d t}-0.1396 s_{t}^{1}=-7.4040 t^{3.6598}+53659.7544.
\label{Eq:num50}
\end{equation}

$\bullet$ NIPGM$(1,1,{{t}^{\alpha }})$ model

Similar to grey GM$(1, 1,{{t}^{\alpha }})$ model, $\lambda =0.9649$ and $\alpha =0.0206$ of NIPGM$(1,1,{{t}^{\alpha }})$ model are also obtained. Thus, the one-time new information priority accumulated generation operation sequence(1-NIPAGO) can be expressed as follow.
\begin{equation}
s_{k}^{1}=\sum_{v=1}^{k} 0.9649^{k-v} s_{v}^{0},k=1,2, \cdots, m.
\nonumber
\end{equation}

Then, the new information priority accumulation sequence $S^{1}$ and the matrix $F, G$ are given as follows.
\begin{equation}
{S^{1}}=\left( \begin{array}{*{35}{r}}
   56748.8850  \\
   119703.1115  \\
   192527.0720  \\
   272500.2666  \\
   359551.2611  \\
   456832.5442  \\
   561813.5647  \\
   676035.3844  \\
\end{array} \right),
F=\left( \begin{array}{*{35}{r}}
   -88225.9983 & 1.0080 & 1  \\
   -156115.0918 & 1.0190 & 1  \\
   -232513.6693 & 1.0261 & 1  \\
   -316025.7638 & 1.0315 & 1  \\
   -408191.9027 & 1.0358 & 1  \\
   -509323.0545 & 1.0394 & 1  \\
   -618924.4745 & 1.0425 & 1  \\
\end{array} \right),
G=\left( \begin{array}{*{35}{r}}
   62954.2265  \\
   72823.9605  \\
   79973.1945  \\
   87050.9945  \\
   97281.2831  \\
   104981.0205  \\
   114221.8197  \\
\end{array} \right).
\nonumber
\end{equation}

Finally, the optimal parameters $(a,b,c)=(-0.0737,-345863.1636,-291896.7690)$ of NIPGM$(1,1,{{t}^{\alpha }})$ model
are obtained. And the whitening equation is given, these is
\begin{equation}
\frac{d s_{t}^{1}}{d t}-0.0737 s_{t}^{1}=-345863.1636 t^{0.0206}-291896.7690.
\label{Eq:num51}
\end{equation}


The prediction values of the seven models in the wind turbine capacity of Europe are listed in Table \ref{table:table11}.
 It is seen from Table \ref{table:table11} and Fig. \ref{fig:preEurope} that the established forecasting model can reflect the trend of wind turbine capacity in Europe from 2015 to 2017.
\begin{table}[!htbp]
\caption{The fitted and predicted values of different models in wind turbine capacity of Europe(Megawatts). }
 \label{table:table11}
 \vspace{6pt}
 \renewcommand{\baselinestretch}{1.25}
 {\footnotesize\centerline{\tabcolsep=3pt
 \begin{tabular}{ccccccccccccc}
 \hline
 Year  & data  &PR(3)    &ARIMA(2, 1, 1)  & GM(1, 1)	& DGM(1, 1)	 & NGM(1, 1, $k$, $c)$	 & GM(1, 1, ${{t}^{\alpha }})$	 & NIPGM(1, 1, ${{t}^{\alpha }})$\\
\hline	
2007	&56748.8850	&56446.3721 	&56748.8850 &56748.8850	&56748.8850	&56748.8850	&56748.8850	&56748.8850\\
2008	&64943.4830	&65969.7452 	&64943.4830 &68007.2739	&68086.5467	&65502.2396	&66034.0365	&64985.6025\\
2009	&77019.9934	&75999.9366	    &75106.1003 &76283.1268	&76381.7043	&75799.3255	&75715.7796	&76166.9682\\
2010	&86721.9742	 &86540.1637 	&86607.7451  &85566.0740	&85687.4822	&86499.7040	&86488.2844	&86832.2696\\
2011	&96603.1278	 &97593.6435 	&98309.8152  &95978.6696	&96127.0068	&97619.1706	&98253.2028	&97720.3058\\
2012	&109884.8729 &109163.5934   &108836.2802 &107658.3813	&107838.4052 &109174.1390	&110841.1855 &109111.8638\\
2013	&120994.6758 &121253.2304   &121799.3639 &120759.4053	&120976.6332 &121181.6661	&124006.1529 &121167.8665\\
2014	&133915.4447 &133865.7719   &134856.0580 &135454.7021	&135715.5250 &133659.4765 &137417.3283  &134005.8885\\[5pt]
2015	&147637.6457 &147004.4350   &147994.4670 &151938.2798	&152250.0937 &146625.9893 &150649.2690	&147726.0120\\
2016	&161939.8681 &160672.4369	&162535.6398 	&170427.7558 &170799.1109 &160100.3448	&163169.8761 &162421.7532\\
2017 &178314.1463 &174872.9948      &177791.4872 &191167.2290 &191608.0023  &174102.4332	&174326.2387 &178185.4696\\ 
\hline
\end{tabular} }}
\end{table}
\begin{figure}[!htbp]
\centering
\includegraphics[height=8cm,width=12cm]{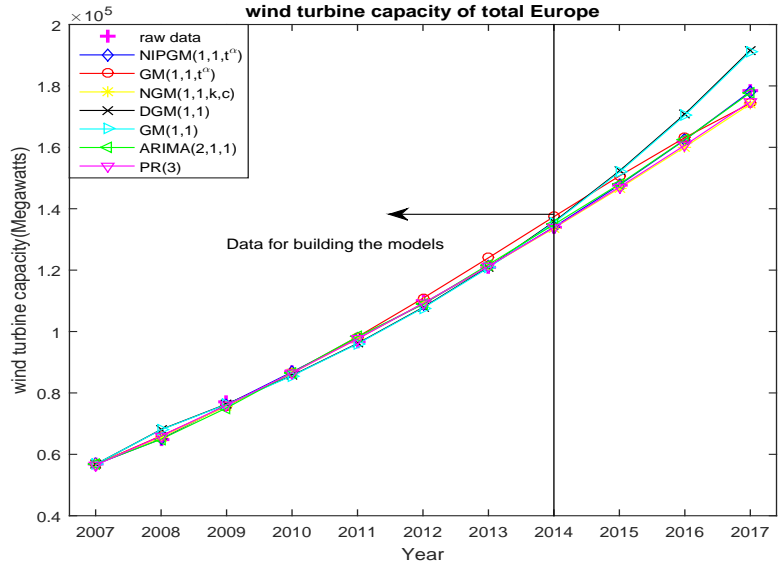}
\caption{The fitted and predicted values of different models in wind turbine capacity of Europe.}
 \label{fig:preEurope}
\end{figure}

The values of APE, RMSEPR, RMSEPO, and RMSE are calculated according to Eq.(\ref{Eq:num39}) to Eq.(\ref{Eq:num41}), and Table \ref{table:table12} gives all the results.
Besides, Fig. \ref{fig:figure11}(left) presents
the absolute percentage errors and Fig. \ref{fig:figure11}(right) provides the comparison results of RMSEPO and RMSE. It is easy observed from Table \ref{table:table12} and Fig.\ref{fig:figure11} that the RMSEPR, RMSEPO, and RMSE of NIPGM$(1,1,{{t}^{\alpha }})$ are 0.4815, 0.1432, 0.3799, respectively. The results display that
the novel model has the most beneficial fitting and forecasting capabilities than other models,
which explains the importance of new information priority accumulated to predicting.
\begin{table}[!htbp]
\caption{Errors of different models in  wind turbine capacity of Europe(\%). }
 \label{table:table12}
 \vspace{6pt}
 \renewcommand{\baselinestretch}{1.25}
 {\footnotesize\centerline{\tabcolsep=3pt
 \begin{tabular}{ccccccccccccc}
 \hline
 Year   &PR(3)    &ARIMA(2, 1, 1)  & GM(1, 1)	& DGM(1, 1)	 & NGM(1, 1, $k$, $c)$	 & GM(1, 1, ${{t}^{\alpha }})$	 & NIPGM(1, 1, ${{t}^{\alpha }})$\\
\hline
2007	&0.5331 	&0.0000 	&	0.0000 	&	0.0000 	&	0.0000 	&	0.0000 	&	0.0000\\
2008	&1.5802	    &0.0000 	&	4.7176 	&	4.8397 	&	0.8604 	&	1.6792 	&	0.0649\\
2009	&1.3244 	&2.4849 	&	0.9567 	&	0.8287 	&	1.5849 	&	1.6933 	&	1.1075\\
2010	&0.2096 	&0.1317 	&	1.3329 	&	1.1929 	&	0.2563 	&	0.2695 	&	0.1272\\
2011	&1.0253 	&1.7667 	&	0.6464 	&	0.4929 	&	1.0518 	&	1.7081 	&	1.1565\\
2012	&0.6564 	&0.9543 	&	2.0262 	&	1.8624 	&	0.6468 	&	0.8703 	&	0.7035\\
2013	&0.2137  	&0.6651 	&	0.1944 	&	0.0149 	&	0.1545 	&	2.4889 	&	0.1431\\
2014	&0.0371 	&0.7024 	&	1.1494 	&	1.3442 	&	0.1911 	&	2.6150 	&	0.0675\\[5pt]
2015	&0.4289 	&0.2417 	&	2.9130 	&	3.1242 	&	0.6852 	&	2.0399 	&	0.0599\\
2016	&0.7827 	&0.3679 	&	5.2414 	&	5.4707 	&	1.1359 	&	0.7595 	&	0.2976\\
2017	&1.9298 	&0.2931 	&	7.2081 	&	7.4553 	&	2.3620 	&	2.2365 	&	0.0722\\[5pt]
RMSEPR	&0.7210 	&0.9579 	&	1.5748 	&	1.5108 	&	0.6780 	&	1.6178 	&	\bf{\underline{0.4815}}\\
RMSEPO	&1.0471 	&0.3009 	&	5.1208 	&	5.3501 	&	1.3944 	&	1.6786 	&	\bf{\underline{0.1432}}\\
RMSE	&0.8188 	&0.7608 	&	2.6386 	&	2.6626 	&	0.8929 	&	1.6360 	&	\bf{\underline{0.3800}}\\
\hline
\end{tabular} }}
\end{table}
\begin{figure}[!htbp]
\centering
\includegraphics[height=6.2cm,width=17cm]{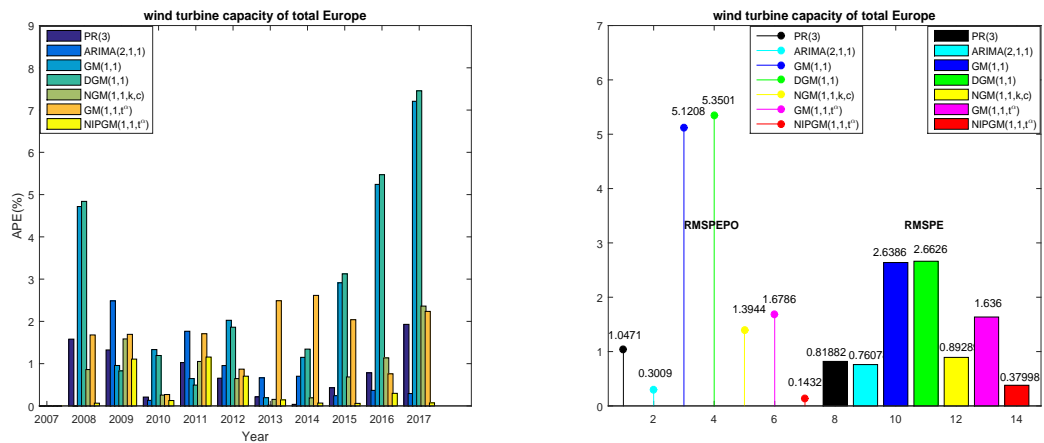}
\caption{Errors of different models in wind turbine capacity of Europe.}
 \label{fig:figure11}
\end{figure}
\newpage
\subsection{Total wind turbine capacity of North America}
 \label{subsec:North-America}
In this subsection, the total wind turbine capacity in North America is studied by seven forecasting models.
Similar to the analysis in section 4, Table \ref{table:table13} lists the minimum RMSE and the corresponding optimized parameters
of the two models.
\begin{table}[!htbp]
\caption{The optimal parameters values of GM(1, 1, ${{t}^{\alpha }})$ model and the novel model in total wind turbine capacity of North America.}
 \label{table:table13}
 \vspace{6pt}
 \renewcommand{\baselinestretch}{1.25}
 {\footnotesize\centerline{\tabcolsep=9pt
 \begin{tabular}{ccccccc}
   \hline
     \multicolumn{3}{c}{NIPGM(1, 1, ${{t}^{\alpha }})$} && \multicolumn{2}{c}{GM(1, 1, ${{t}^{\alpha }})$} \\
    \hline
  RMSE(\%)      & $\lambda $      &$\alpha $      && RMSE(\%)    & $\alpha $      \\
 \hline
{{\bf 2.4236}}   &0.9086    &0.2637   && 5.4472 & 1.6400     \\
\hline
\end{tabular} }}
\end{table}

Further, the forecasting models are built by using the wind turbine capacity data from 2007 to 2014 in
North America. Table \ref{table:table14} gives the fitting and predicting values of the seven models.
It can be noticed from Fig. \ref{fig:preAmercian} and Table \ref{table:table14} that
the established seven forecasting models have an excellent predictive effect in wind turbine capacity forecasting of North America,
which can reflect the trend of wind turbine capacity from 2015 to 2017.
\begin{table}[!htbp]
\caption{The fitted and predicted values of different models in wind turbine capacity of North America(Megawatts). }
 \label{table:table14}
 \vspace{6pt}
 \renewcommand{\baselinestretch}{1.25}
 {\footnotesize\centerline{\tabcolsep=3pt
 \begin{tabular}{ccccccccccccc}
 \hline
Year  & data  &PR(1)    &ARIMA(2, 1, 1)  & GM(1, 1)	& DGM(1, 1)	 & NGM(1, 1, $k$, $c)$	 & GM(1, 1, ${{t}^{\alpha }})$	 & NIPGM(1, 1, ${{t}^{\alpha }})$\\
\hline
2007 	&18810.0000 &19869.0833	&18810.0000	&18810.0000 &18810.0000 &18810.0000 &18810.0000 &18810.0000\\
2008 	&27940.0000 &28534.6667	&27940.0000	&33430.1116 &33498.4981 &28055.3165 &23953.1486 &27265.0412\\
2009 	&38933.0000 &37200.2500	&37524.5787	&38866.9876 &38957.6409 &37830.7879 &34862.0700 &37921.7659\\
2010 	&45054.0000 &45865.8333	&48161.9090	&45188.0851 &45306.4426 &47064.7950 &45973.1476 &46970.6018\\
2011 	&53485.0000 &54531.4167	&54092.3164	&52537.2085 &52689.8882 &55787.3295 &56359.8353 &55366.0191\\
2012 	&67934.0000 &63197.0000	&62249.4276	&61081.5500 &61276.5902 &64026.7220 &65947.8132 &63448.7993\\
2013 	&71093.0000 &71862.5833	&76226.4435	&71015.4928 &71262.6395 &71809.7338 &74849.8077 &71379.1940\\[5pt]
2014 	&78340.0000 &80528.1667	&79298.1398	&82565.0334 &82876.0831 &79161.6438 &83193.0517 &79243.8894\\
2015 	&87058.4200 &89193.7500	&86307.0648	&95992.9231 &96382.1325 &86106.3307 &91081.8126 &87093.9468\\
2016 	&96994.0000 &97859.3333	&94743.4150	&111604.6455 &112089.2193 &92666.3508 &98595.3608 &94961.1729\\
2017 	&104070.0000 &106524.9167	&104358.1253	&129755.3663 &130356.0397 &98863.0108 &105793.3036 &102866.1383\\
\hline
\end{tabular} }}
\end{table}
\begin{figure}[!htbp]
\centering
\includegraphics[height=7.8cm,width=12cm]{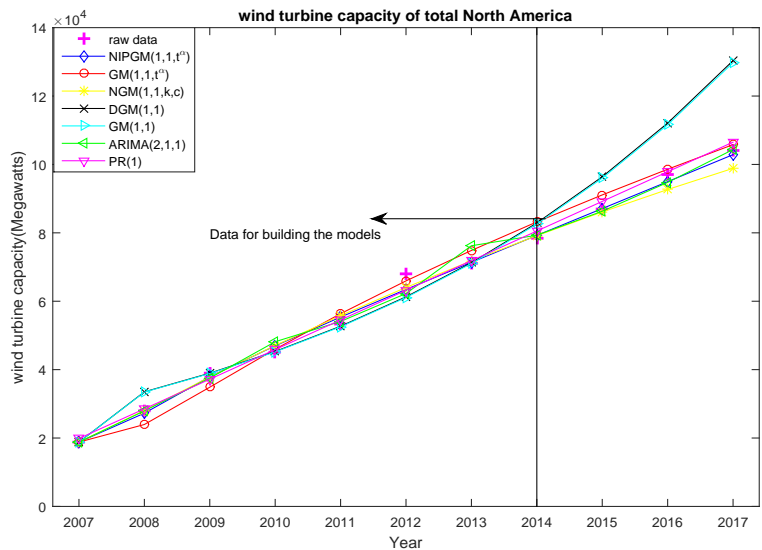}
\caption{The fitted and predicted values of different models in wind turbine capacity of North America.}
 \label{fig:preAmercian}
\end{figure}

According to the Eq.(\ref{Eq:num39}) to Eq.(\ref{Eq:num41}), the values of fitting and forecasting errors of the seven models can be calculated. Table \ref{table:table15} and Fig.\ref{fig:figure12}
show that the RMSEPR, RMSEPO and RMSE of NIPGM$(1,1,{{t}^{\alpha }})$ are 2.9918, 1.0978, 2.4236.
The numerical results expose that the forecasting and fitting accuracy of the novel model are higher
than other models. And new information priority accumulation has great significance for enhancing  the prediction
accuracy. Therefore, the proposed model is very suitable for forecasting wind turbine capacity in North America.
\begin{table}[!htbp]
\caption{Errors of different models in  wind turbine capacity of North America(\%). }
 \label{table:table15}
 \vspace{6pt}
 \renewcommand{\baselinestretch}{1.25}
 {\footnotesize\centerline{\tabcolsep=3pt
 \begin{tabular}{ccccccccccccc}
 \hline
 Year  &PR(1)    &ARIMA(2, 1, 1) & GM(1, 1)	& DGM(1, 1)	 & NGM(1, 1, $k$, $c)$	 & GM(1, 1, ${{t}^{\alpha }})$	 & NIPGM(1, 1, ${{t}^{\alpha }})$\\
\hline
2007 	&	5.6304 		&	0.0000 		&	0.0000 	&0.0000 	&	0.0000 	&	0.0000 	&	0.0000\\
2008 	&	2.1284 		&	0.0000 		&	19.6496 &19.8944 	&	0.4127 	&	14.2693 &	2.4157\\
2009 	&	4.4506 		&	3.6176 		&	0.1696 	&0.0633 	&	2.8310 	&	10.4562 &	2.5974\\
2010 	&	1.8019 		&	6.8982 		&	0.2976 	&0.5603 	&	4.4631 	&	2.0401 	&	4.2540\\
2011 	&	1.9565 		&	1.1355 		&	1.7721 	&1.4866 	&	4.3046 	&	5.3750 	&	3.5169\\
2012 	&	6.9729 		&	8.3678 		&	10.0869 &9.7998 	&	5.7516 	&	2.9237 	&	6.6023\\
2013 	&	1.0825 		&	7.2207 		&	0.1090 	&0.2386 	&	1.0082 	&	5.2844 	&	0.4026\\
2014 	&	2.7932 		&	1.2231 		&	5.3932 	&5.7903 	&	1.0488 	&	6.1949 	&	1.1538\\[5pt]
2015 	&	2.4528 		&	0.8630 		&	10.2627 &10.7097 	&	1.0936 	&	4.6215 	&	0.0408\\
2016 	&	0.8922 		&	2.3203 		&	15.0635 &15.5630 	&	4.4618 	&	1.6510 	&	2.0958\\
2017 	&	2.3589 		&	0.2769 		&	24.6809 &25.2580 	&	5.0034 	&	1.6559 	&	1.1568\\[5pt]
RMSEPR	&	3.0266 		&	4.0661 		&	5.3540 	&5.4048 	&	2.8314 	&	6.6491 	&	\bf{\underline{2.9918}}\\
RMSEPO	&	1.9013 		&	1.1534 		&	16.6690 &17.1769 	&	3.5196 	&	2.6428 	&	\bf{\underline{1.0978}}\\
RMSE	&	2.6890 		&	3.1923 		&	8.7485 	&8.9364 	&	3.0379 	&	5.4472 	&	\bf{\underline{2.4236}}\\
\hline
\end{tabular} }}
\end{table}
\begin{figure}[!htbp]
\centering
\includegraphics[height=5.7cm,width=17cm]{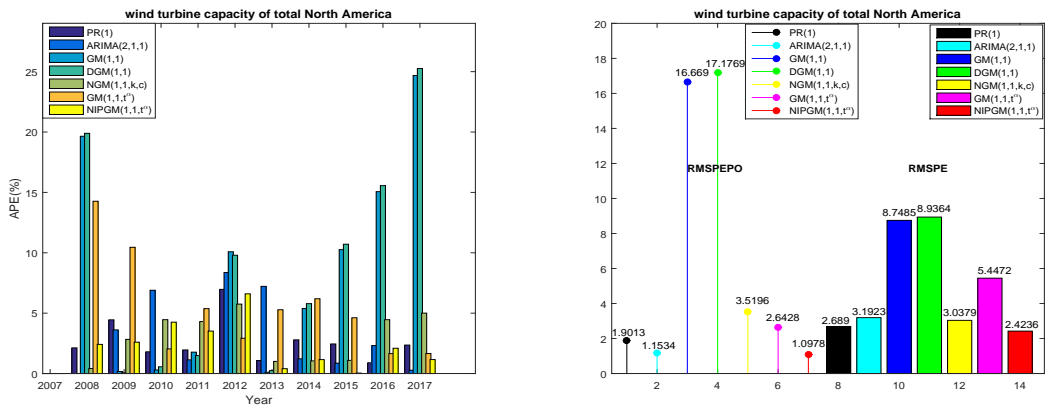}
\caption{Errors of different models in wind turbine capacity of North America.}
 \label{fig:figure12}
\end{figure}
\subsection{ Total wind turbine capacity of Asia}
 \label{subsec: Asia}
This subsection considers the total wind turbine capacity of Asia through seven forecasting models.
According to the discussion in Section 4, the minimum RMSE and corresponding optimal parameters of the two models are listed Table \ref{table:table16}.
\begin{table}[!htbp]
\caption{ The optimal parameters of GM(1,1,${{t}^{\alpha }})$ model and the novel model in total wind turbine capacity of Asia.}
 \label{table:table16}
 \vspace{6pt}
 \renewcommand{\baselinestretch}{1.25}
 {\footnotesize\centerline{\tabcolsep=9pt
 \begin{tabular}{ccccccc}
 \hline
     \multicolumn{3}{c}{NIPGM(1, 1, ${{t}^{\alpha }})$} && \multicolumn{2}{c}{GM(1, 1, ${{t}^{\alpha }})$} \\
    \hline
  RMSE(\%)      & $\lambda $      &$\alpha $      && RMSE(\%)    & $\alpha $      \\
 \hline
{{\bf 3.3256}}   &0.9014   &1.0978   && 4.5568 &2.1840     \\
\hline
\end{tabular} }}
\end{table}

\begin{table}[!htbp]
\caption{The fitted and predicted values of different models in wind turbine capacity of Asia(Megawatts). }
 \label{table:table17}
 \vspace{6pt}
 \renewcommand{\baselinestretch}{1.25}
 {\footnotesize\centerline{\tabcolsep=3pt
 \begin{tabular}{ccccccccccccc}
 \hline
 Year  & data  &PR(2)   &ARIMA(2, 1, 2)  & GM(1, 1)	& DGM(1, 1)	 & NGM(1, 1, $k$, $c)$	 & GM(1, 1, ${{t}^{\alpha }})$	 & NIPGM(1, 1, ${{t}^{\alpha }})$\\
\hline
2007 &15327.3260 &13614.4778	&15327.3260	&15327.3260 &15327.3260 	&15327.3260 	&15327.3260 	&15327.3260\\
2008 &22356.3570 &23523.2229	&24442.2317	&30411.7783 &30593.2156 	&20761.8645 	&19571.5731 	&21279.7344\\
2009 &33737.5070 &35665.8868	&36990.6742	&39064.6946 &39349.6869 	&34761.6976 	&32086.3194 	&34904.5823\\
2010 &48622.3270 &50042.4692	&44795.2174	&50179.5833 &50612.4585 	&50156.2873 	&49814.9600 	&50316.4550\\
2011 &69073.8140 &66652.9702	&66555.9999	   &64456.9375 &65098.8905 	&67084.5888 	&69887.5839 	&67445.5987\\
2012 &87572.6850 &85497.3899	&85642.7395	    &82796.5584 &83731.6674 	&85699.4005 	&91305.4153 	&86273.5270\\
2013 &105496.3320 &106575.7282  &109051.7419	&106354.2630 &107697.5672 	&106168.7439 	&113672.2917 	&106801.6654\\
2014 &129273.7820 &129887.9851  &124833.1761	&136614.7274 &138523.0502 	&128677.3797 &136798.4924 	&129041.2131\\[5pt]
2015 &167528.3270 &155434.1606  &156064.2433	&175485.0554 &178171.4847 	&153428.4759 	&160573.3168 	&153008.8500\\
2016 &189684.6370 &183214.2548  &194671.4654	&225414.9700 &229168.1992 	&180645.4414 	&184921.4951 	&178724.6215\\
2017 &209977.2340 &213228.2675  &217864.4953    &289551.2019 &294761.3285 &210573.9425 &209786.6733 	&206210.7892\\
\hline
\end{tabular} }}
\end{table}
\begin{figure}[!htbp]
\centering
\includegraphics[height=8cm,width=12cm]{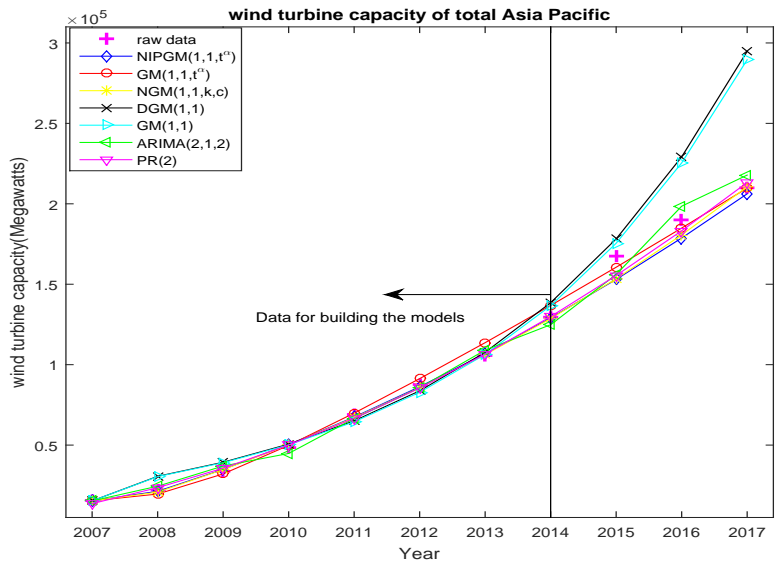}
\caption{The fitted and predicted values of different models in wind turbine capacity of Asia.}
 \label{fig:preAsia}
\end{figure}
\begin{table}[!htbp]
\caption{Errors of different models in  wind turbine capacity of Asia(\%). }
 \label{table:table18}
 \vspace{6pt}
 \renewcommand{\baselinestretch}{1.25}
 {\footnotesize\centerline{\tabcolsep=3pt
 \begin{tabular}{ccccccccccccc}
 \hline
  Year  &PR(2)    &ARIMA(2, 1, 2)  & GM(1, 1)	& DGM(1, 1)	 & NGM(1, 1, $k$, $c)$	 & GM(1, 1, ${{t}^{\alpha }})$	 & NIPGM(1, 1, ${{t}^{\alpha }})$\\
\hline
2007 	&11.1751 	&	0.0000 		&	0.0000 		&	0.0000 	&	0.0000 	&	0.0000 	&	0.0000\\
2008 	&5.2194 	&	9.3301 		&	36.0319 	&	36.8435 &	7.1322 	&	12.4563 &	4.8157\\
2009 	&5.7158 	&	9.6426   	&	15.7901 	&	16.6348 &	3.0358 	&	4.8942 	&	3.4593\\
2010 	&2.9208 	&	7.8711 		&	3.2028 		&	4.0930 	&	3.1548 	&	2.4529 	&	3.4843\\
2011 	&3.5047 	&	3.6451 		&	6.6840 		&	5.7546 	&	2.8799 	&	1.1781 	&	2.3572\\
2012 	&2.3698 	&	2.2038 		&	5.4539 		&	4.3861 	&	2.1391 	&	4.2624 	&	1.4835\\
2013 	&1.0232 	&	3.3702 		&	0.8132 		&	2.0866 	&	0.6374 	&	7.7500 	&	1.2373\\
2014 	&0.4751 	&	3.4350 		&	5.6786 		&	7.1548 	&	0.4613 	&	5.8208 	&	0.1799\\[5pt]
2015 	&7.2192 	&	6.8431 		&	4.7495 		&	6.3530 	&	8.4164 	&	4.1515 	&	8.6669\\
2016 	&3.4111 	&	4.6528 		&	18.8367 	&	20.8154 &	4.7654 	&	2.5111 	&	5.7780\\
2017 	&1.5483 	&	3.7562 		&	37.8965 	&	40.3778 &	0.2842 	&	0.0908 	&	1.7937\\[5pt]
RMSEPR	&3.0327 	&	5.6426 		&	10.5221 	&	10.9933 &	2.7772 	&	5.5450 	&	\bf{\underline{2.4310}}\\
RMSEPO	&4.0595 	&	5.0841 		&	20.4942 	&	22.5154 &	4.4887 	&	\bf{\underline{2.2511}} 	&	5.4129\\
RMSE	&3.3407 	&	5.4750 		&	13.5137 	&	14.4500 &\bf{\underline{3.2906}} 	&	4.5568 	&	3.3256\\
\hline
\end{tabular} }}
\end{table}
Table \ref{table:table17} and Fig. \ref{fig:preAsia} display the simulation and prediction values. Table \ref{table:table18} and
Fig. \ref{fig:figure13} show the results of APE, RMSEPR, RMSEPO, and RMSE. We can notice from Fig.\ref{fig:figure13} and Table \ref{table:table18} that the maximum forecasting error of the basic grey model and discrete grey model are as high as 37.8965\%,40.3778\%, respectively. It means that the forecasting error of the basic grey model and discrete grey model are enormous. Further, the RMSEPR of the novel
model is 2.4310, the RMSEPO of the grey model with time power is 2.2511, and the RMSE of the extended non-homogeneous exponential grey model is 3.2906. The results display that the prediction errors of the novel grey model are smaller than other models. Besides, the grey model with time power and extended non-homogeneous exponential grey model are particular cases of the proposed model. What's more, it can indicate that the new information prioritization is important for forecasting.

\begin{figure}[!htbp]
\centering
\includegraphics[height=5.7cm,width=17cm]{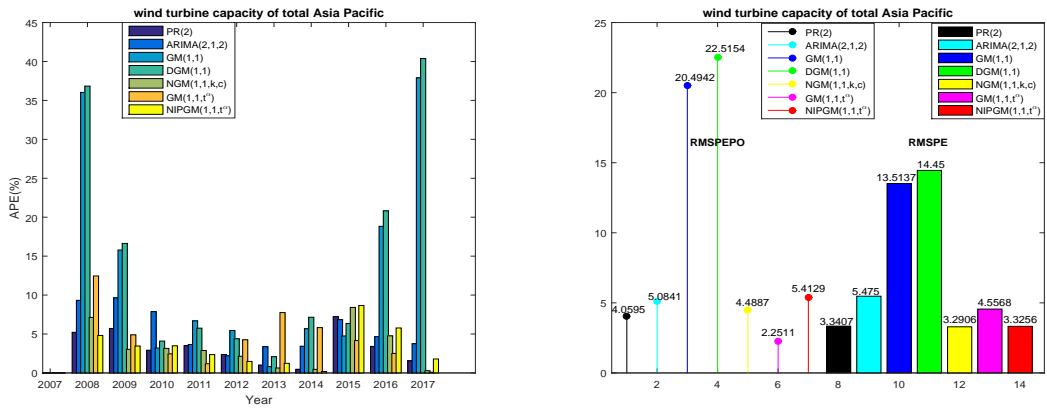}
\caption{Errors of different models in wind turbine capacity of Asia.}
 \label{fig:figure13}
\end{figure}

\subsection{ Total wind turbine capacity of the world}
 \label{subsec: world}
 In this subsection, the global wind turbine capacity is derived through seven prediction models.
 Table \ref{table:table19} lists the minimum RMSE and the corresponding optimal values of the two models.
\begin{table}[!htbp]
\caption{ The optimal parameters values of GM(1, 1, ${{t}^{\alpha }})$ model and the novel model in total wind turbine capacity of the world.}
 \label{table:table19}
 \vspace{6pt}
 \renewcommand{\baselinestretch}{1.25}
 {\footnotesize\centerline{\tabcolsep=9pt
 \begin{tabular}{ccccccc}
  \hline
     \multicolumn{3}{c}{NIPGM(1, 1, ${{t}^{\alpha }})$} && \multicolumn{2}{c}{GM(1, 1, ${{t}^{\alpha }})$} \\
    \hline
  RMSE(\%)      & $\lambda $      &$\alpha $      && RMSE(\%)    & $\alpha $      \\
 \hline
{{\bf 1.4391}}   &0.7161   &1.3276   && 4.0898 &2.8605     \\
\hline
\end{tabular} }}
\end{table}

\begin{table}[!htbp]
\caption{The fitted and predicted values of different models in wind turbine capacity of the world(Megawatts).}
 \label{table:table20}
 \vspace{6pt}
 \renewcommand{\baselinestretch}{1.25}
 {\footnotesize\centerline{\tabcolsep=3pt
 \begin{tabular}{ccccccccccccc}
 \hline
 Year  & data  &PR(2)    &ARIMA(2, 1, 1)  & GM(1, 1)	& DGM(1, 1)	 & NGM(1, 1, $k$, $c)$	 & GM(1, 1, ${{t}^{\alpha }})$	 & NIPGM(1, 1, ${{t}^{\alpha }})$\\
\hline
2007 &91894.0080  &89895.7940  &91894.0080	&91894.0080  &91894.0080  &91894.0080  &91894.0080  &91894.0080\\
2008 &116511.6230 &119032.8856 &116511.6230	&129789.1184 &130130.5184 &115887.4467 	&118956.8735 &117442.6582\\
2009 &151655.8934 &150938.8752 &152167.2895	&153837.9115 &154305.9948 &149773.8023 	&148814.4211 &151376.2996\\
2010 &182901.3012 &185613.7630 &188450.4898	&182342.7364 &182972.7593 &185600.8220 	&184697.3033 &187037.5395\\
2011 &222516.8618 &223057.5489 &221338.5422	&216129.2570 &216965.1975 &223479.6474 	&226537.3484 &224763.4997\\
2012 &269853.3279 &263270.2329 &262789.1973	&256176.1257 &257272.7062 &263527.7848 	&273643.5552 &264896.4706\\
2013 &303112.5198 &306251.8150 &313238.8541	&303643.3302 &305068.4908 &305869.4703 	&324527.1561 &307861.1290\\
2014 &351617.6747 &352002.2953 &348668.0199	&359905.7943 &361743.7133 &350636.0549 	&376684.5272 &354209.7165\\[5pt]
2015 &417144.1127 &400521.6737 &399131.1492	&426593.2028 &428947.9839 &397966.4121 	&426326.0794 &404666.5259\\
2016 &467698.4951 &451809.9502 &469387.1588	&505637.2072 &508637.3755 &448007.3685 	&468036.6722 &460180.8872\\
2017 &514798.1313 &505867.1249 &524331.1065	&599327.3771 &603131.3573 &500914.1595 	&494349.5839 &521994.0838\\
\hline
\end{tabular} }}
\end{table}

Furthermore, all the results of seven forecasting models are listed in Table \ref{table:table20}, Fig.\ref{fig:preWorld}, Table \ref{table:table21} and Fig.\ref{fig:figure14}.
 We can observe in Table \ref{table:table21}, and Fig.\ref{fig:figure14}, the RMSEPR of the extended non-homogeneous exponential grey model is 1.0312.
 However, it is worth noting that the RMSEPO and RMSE of
 NIPGM$(1, 1,{{t}^{\alpha }})$ are 1.9988, 1.4392, respectively. The results display that the extended non-homogeneous exponential grey model
 has the best accuracy of fitting and the NIPGM$(1,1,{{t}^{\alpha }})$ has the best performance of forecasting. The reason is that the extended non-homogeneous exponential grey model is a particular case of the proposed model. Thus, the new grey model is suitable for predicting wind turbine capacity in the world.
\begin{figure}[!htbp]
\centering
\includegraphics[height=8cm,width=12cm]{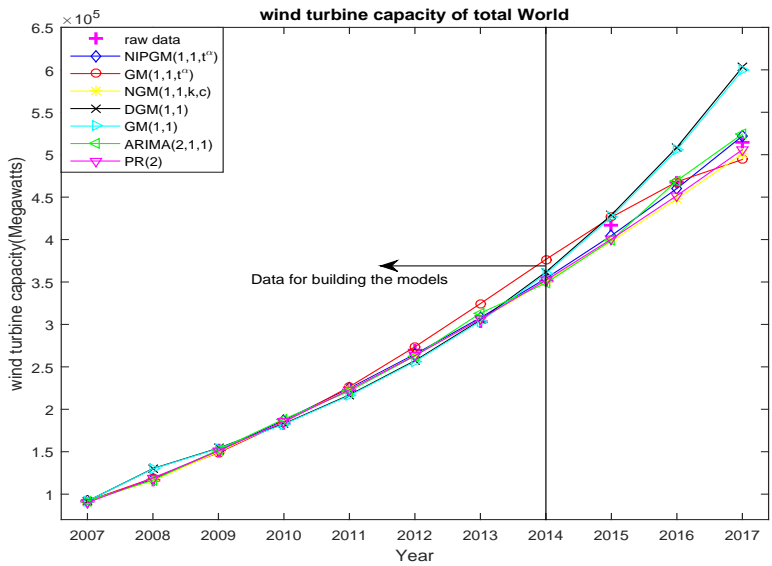}
\caption{The fitted and predicted values of different models in global wind turbine capacity.}
 \label{fig:preWorld}
\end{figure}
\begin{table}[!htbp]
\caption{Errors of different models in global wind turbine capacity(\%). }
 \label{table:table21}
 \vspace{6pt}
 \renewcommand{\baselinestretch}{1.25}
 {\footnotesize\centerline{\tabcolsep=6pt
 \begin{tabular}{ccccccccccccc}
 \hline
  Year  &PR(2)    &ARIMA(2, 1, 1)  & GM(1, 1)	& DGM(1, 1)	 & NGM(1, 1, $k$, $c)$	 & GM(1, 1, ${{t}^{\alpha }})$	 & NIPGM(1, 1, ${{t}^{\alpha }})$\\
\hline
2007 	 &2.1745	 &	0.0000 		&	0.0000 	&		0.0000 	&	0.0000 	&	0.0000 	&	0.0000\\
2008 	 &2.1640	 &	0.0000 		&	11.3959 &		11.6889 &	0.5357 	&	2.0987 	&	0.7991\\
2009 	 &0.4728	 &	0.3372 		&	1.4388 	&		1.7474 	&	1.2410 	&	1.8736 	&	0.1844\\
2010 	 &1.4830	 &	3.0340 		&	0.3054 	&		0.0391 	&	1.4759 	&	0.9820 	&	2.2615\\
2011 	 &0.2430	 &	0.5295 		&	2.8706 	&		2.4949 	&	0.4327 	&	1.8068 	&	1.0096\\
2012 	 &2.4395	 &	2.6178 		&	5.0684 	&		4.6620 	&	2.3441 	&	1.4046 	&	1.8369\\
2013 	 &1.0357	 &	3.3408 		&	0.1751 	&		0.6453 	&	0.9095 	&	7.0649 	&	1.5666\\
2014 	 &0.1094	 &	0.8389 		&	2.3571 	&		2.8798 	&	0.2792 	&	7.1290 	&	0.7372\\[5pt]
2015 	 &3.9848	 &	4.3182 		&	2.2652 	&		2.8297 	&	4.5974 	&	2.2011 	&	2.9912\\
2016 	 &3.3972	 &	0.3611 		&	8.1118 	&		8.7533 	&	4.2102 	&	0.0723 	&	1.6074\\
2017 	 &1.7349	 &	1.8518 		&	16.4199 &		17.1588 &	2.6970 	&	3.9721 	&	1.3978\\[5pt]
RMSEPR	 &1.1353	 &	1.5283 		&	3.3730 	&		3.4511 	&\bf{\underline{1.0312}} 	&	3.1942 	&	1.1993\\
RMSEPO	 &3.0390	 &	2.1770 		&	8.9323 	&		9.5806 	&	3.8349 	&	2.0819 	&	\bf{\underline{1.9988}}\\
RMSE	 &1.7064	 &	1.7229 		&	5.0408 	&		5.2899 	&	1.8723 	&	2.8605 	&	\bf{\underline{1.4392}}\\
\hline
\end{tabular} }}
\end{table}
\begin{figure}[!htbp]
\centering
\includegraphics[height=5.7cm,width=17cm]{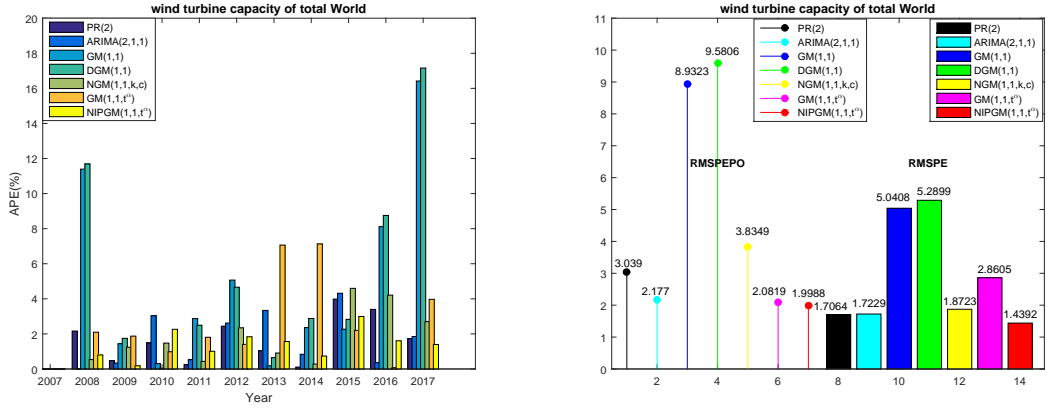}
\caption{Errors of different models in wind turbine capacity of the world.}
 \label{fig:figure14}
\end{figure}

\subsection{Comparison of prediction models}
According to the results of forecasting wind turbine capacity in Europe, North America, Asia, and the world, we further compare the performance of accuracy of seven forecasting models.
The ranks of the performance of simulation and prediction are listed in Table  \ref{table:rankwind}.
As can be observed from Table \ref{table:rankwind}, all the prediction models can get
acceptable results by 8 years of wind turbine capacity data. However, it is worth noting that the NIPGM$(1,1,{{t}^{\alpha }})$ model has the best
simulation and prediction capability in wind turbine capacity forecasting.
\begin{table}[!htbp]
\caption{The average values and ranks of RMSEPR, RMSEPO and RMSE for seven models in wind turbine capacity forecasting. }
 \label{table:rankwind}
 \vspace{6pt}
 \renewcommand{\baselinestretch}{1.25}
 {\footnotesize\centerline{\tabcolsep=4pt
 \begin{tabular}{lcccccccccccc}
 \hline
                &PR($n$)     &ARIMA($p$, $d$, $q$)   & GM(1, 1)	& DGM(1, 1)	 & NGM(1, 1, $k$, $c)$	 & GM(1, 1, ${{t}^{\alpha }})$	 & NIPGM(1, 1, ${{t}^{\alpha }})$\\
\hline
Average RMSEPR     &2.6385	 &4.0650	&6.9413	&7.1200	&2.4393	&5.6687	&{\bf \underline{  2.3679  }}\\
Simulation rank   &3	  &4	  & 6	 &  7	 &  2	 &  5	&  {\bf \underline{      1       }}\\
Average RMSEPO    &3.3490	 &2.9051	&17.0721	&18.2077	&4.4125	&2.8848	&{\bf \underline{2.8842}}\\
Prediction rank   &4	  &3       & 6	  & 7	  & 5     & 2	  &   {\bf \underline{   1    }}\\
Average RMSE      &2.8517	 &3.7170	&9.9805	&10.4463	&3.03123	&4.8335	&{\bf \underline{  2.5228  }}\\
Overall rank      &2	      &4	& 6	      & 7	  & 3     & 5	       & {\bf \underline{     1    }}\\		
\hline
\end{tabular} }}
\end{table}

Among the seven prediction models,  both of the polynomial regression model and time series model require a lot of historical data when establishing the prediction model. In wind turbine capacity forecasting, using only 8 data to develop prediction models is the reason of poor prediction accuracy.
Compared with several grey models, we find that new information priority accumulation can significantly exploit data with new characteristic behaviors. Therefore, the NIPGM$(1, 1,{{t}^{\alpha }})$ model has a great advantage for small samples with new characteristic behaviors.

\subsection{Future discussion and development suggestion}
 \label{subsec:discussion}
As can be seen from the above discussion, the NIPGM$(1,1,{{t}^{\alpha }})$ outperforms other six prediction models.
Therefore, the NIPGM$(1,1,{{t}^{\alpha }})$ model will be applied to predict wind turbine capacity of Europe,
North America, Asia, and the world from 2018 to 2020. Table \ref{table:table22} and Fig.\ref{fig:figure15} lists the prediction values, and Table \ref{table:table23}
displays the annual increase rates.

 With the continuous deepening understanding of environmental issues around the world, and the constant improvement of renewable energy comprehensive utilization technologies, the global wind power industry has quickly improved in recent years.
At present, wind power generation has turned into one of the fastest-growing renewable energy sources, and its proportion has been increasing in clean energy production. Thus, it owes broad prospects for development.
Further, we will present the growth trend of the next three years of wind turbine capacity in Europe, North America, Asia, and the world.
\begin{table}[!htbp]
\caption{ The results of the wind turbine capacity in Europe, North America, Asia, and the world from 2017 to 2020(Megawatts). }
 \label{table:table22}
 \vspace{6pt}
 \renewcommand{\baselinestretch}{1.25}
 {\footnotesize\centerline{\tabcolsep=6pt
 \begin{tabular}{ccccccccccccc}
 \hline
 Year                & Europe	&North America	 &Asia	 & World\\
\hline
2017	&178314.1463 	&104070.0000 	&209977.2340 	&514798.1313\\
2018	&195111.4074 	&110822.4755 	&235491.1594 	&591725.3917\\
2019	&213297.6361 	&118839.3441 	&266590.6714 	&671483.3296\\
2020	&232847.4225 	&126922.9203 	&299535.1338 	&764009.7685\\
\hline
\end{tabular} }}
\end{table}
\begin{table}[!htbp]
\caption{ The results of annual increase rate of wind turbine capacity in Europe, North America, Asia, and the world from 2017 to 2020(\%). }
 \label{table:table23}
 \vspace{6pt}
 \renewcommand{\baselinestretch}{1.25}
 {\footnotesize\centerline{\tabcolsep=6pt
 \begin{tabular}{ccccccccccccc}
 \hline
  Year                     & Europe	&North America	 &Asia	 & World\\
\hline
2017	 &10.1113 	 &7.2953 	 &10.6981 	 &10.0705\\
2018	 &9.4200 	 &6.4884 	 &12.1508 	 &14.9432\\
2019	 &9.3209 	 &7.2340 	 &13.2062 	 &13.4789\\
2020	 &9.1655 	 &6.8021 	 &12.3577 	 &13.7794\\
mean value	 &9.5045 	 &6.9549 	 &12.1032 	 &13.0680\\
\hline
\end{tabular} }}
\end{table}
\begin{figure}[!htbp]
\centering
\includegraphics[height=12cm,width=16.5cm]{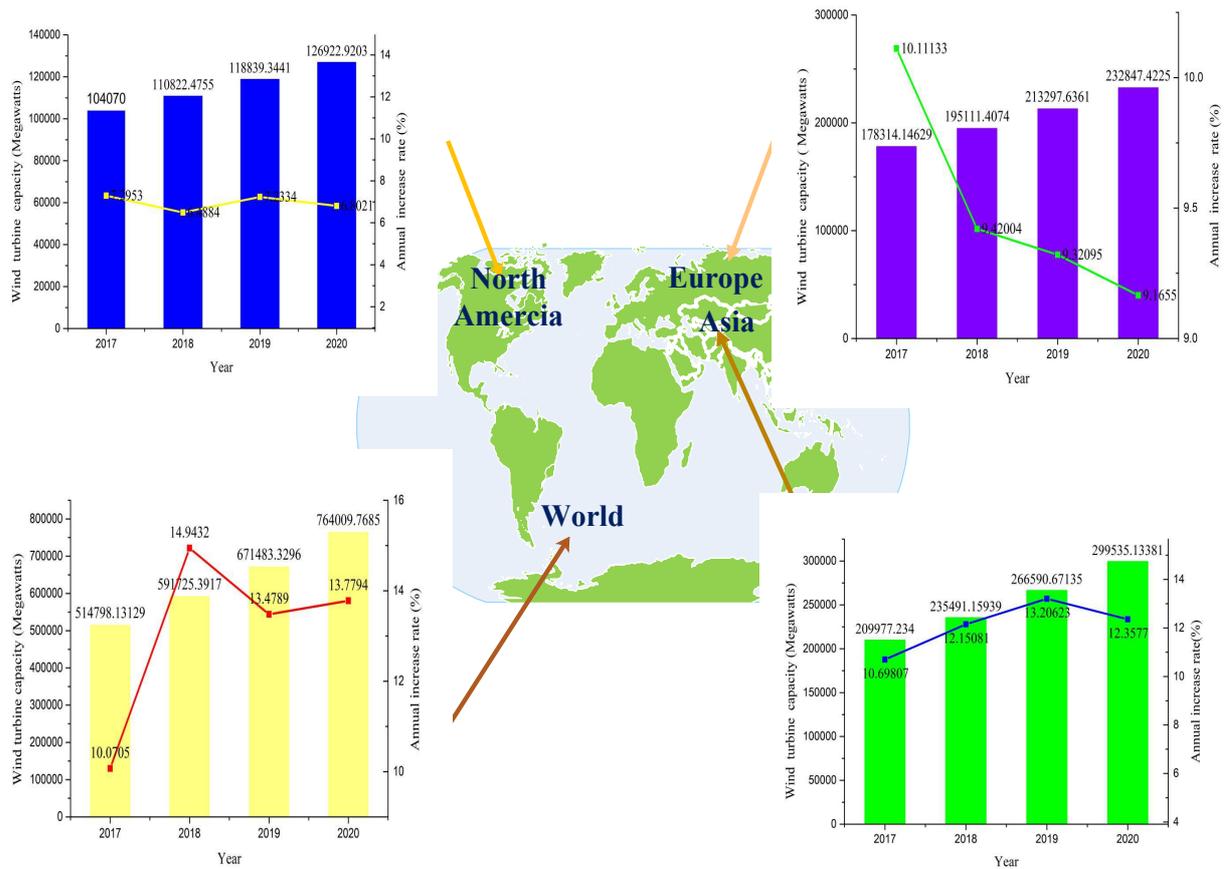}
\caption{The results of prediction and annual increase rate of wind turbine capacity in Asia ,Europe, North America, and the world from 2017 to 2020.}
 \label{fig:figure15}
\end{figure}

In Europe, it is predicted that the wind turbine capacity will maintain steady growth at an average annual growth rate of 9.5045\% approximately,
which will reach 232847.4225 Megawatts in 2020. And the annual increase rate of wind turbine capacity will decline in the next
three years. The result is consistent with the predicted result of the {\it GWEC Global Wind Report 2018} (https://gwec.net) and literature\citep{KLESSMANN20117637}. The reason is that many Member States of US reduce wind turbine capacity investments and support programs.
While the wind turbine capacity in Europe is increasing, which will face regional imbalances and bring challenges in materials, control, and storage\citep{SCARLAT2015969}. Therefore, it is possible to promote the excellent and rapid advancement of wind power generation in Europe by researching and developing diversified technologies for cleaner production.

In North America, the wind turbine capacity will grow at an average annual increase rate of 6.9549\%, and it will increase from 104070
Megawatts in 2017 to 126922.9303 Megawatts in 2020. It reflects that the wind turbine in North America has an unstable growth trend in the future.
Although North America has enormous potential for wind energy development, some parts still rely
on fossil fuels for power generation\citep{Mercer2017}.
The {\it GWEC Global Wind Report 2018} states that governments' commitment to large-scale auction is driving the wind turbine capacity.
Therefore, governments should maintain their commitment and make reasonable policies to promote positive development.

In Asia, the wind turbine capacity  will continue to grow at an average annual increase rate of 12.1032\% approximately, which is expected to reach 299535.1338 Megawatts in 2020. As we all know, Asia is the region with the largest installed capacity of global wind turbines.
However, the wind energy of South-East Asia and India will remain at a moderate level. Therefore,
South-East Asian governments should call for stop prioritizing coal\citep{SHUKLA2017342}.
For example, Vietnam should increase a higher tax on fossil fuel to promote the development of clean energy department\citep{NONG201990}.
India should drive the volume of wind turbine capacity with the execution of the scheduled auctions\citep{Neeru2019Renewable}.
Furthermore, the relevant practitioners can use effective wind energy prediction methods to increase wind turbine capacity\citep{Yang2019hybrid}.
Besides, Asian governments should correctly recognize the opportunities and challenges for renewable energy and
increase cooperation among countries of the renewable energy sector\citep{SHARVINI2018257}.

In the world, the wind turbine capacity will maintain sharply increase from 514798.1313 Megawatts in 2017 to
764009.7685 Megawatts in 2020, which the average annual growth rate is 13.0068\%.
Currently, global wind power accounts for 16\% of renewable energy\citep{HEUBERGER2018367}.
With the rapid development of wind energy, wind power generation has received more and more attention from all over the world.
It is imperative to increase the construction of wind turbines, expand the scale of clean energy supply\citep{LUNDIE20191042}, and promote the continuous improvement of production technology by power generation enterprises.
Countries around the world should increase communication and cooperation\citep{AKIZUGARDOKI20181145} between the wind power sectors.
These suggestions can contribute to advance of the low carbon economy and achieve rapid development of clean energy production.
%
\section{Conclusions}
\label{sec:conclu}

In this paper, combining the new information priority accumulation with grey GM(1, 1, ${{t}^{\alpha }})$ model, we propose a novel NIPGM$(1, 1,{{t}^{\alpha }})$  model to predict short-term wind turbine capacity of Europe, North America, Asia, and the world.
The NIPGM$(1, 1,{t}^{\alpha })$ model is a more generic model. The traditional GM$(1, 1)$ model, NGM(1, 1, $k$, $c)$ model, NGM(1, 1, $k)$ model, and GM(1, 1, ${t}^{\alpha })$ model are special cases of the proposed model with determined parameters $\lambda $ and $\alpha$.

Three numerical cases are used to evaluate the accuracy of the novel model and six existing prediction models. It shows that the NIPGM$(1, 1,{t}^{\alpha })$ model has a sufficient advantage for small samples than the polynomial regression model and the time series model. Furthermore, the proposed model is superior to other forecasting models, the results reveal that new information priority accumulation is an effective method to improve the prediction ability of grey model. Besides, the particle swarm optimization algorithm is very stable when determining the optimal values of  nonlinear parameters.

The proposed model is applied to predict the total wind turbine capacity, and it has the highest simulation and prediction accuracy than other
commonly prediction models.
 It is predicted that the average annual increase rate of the total wind turbine capacity in Europe, North America, Asia, and the world from 2018 to 2020 are 9.5045\%, 6.9549\%, 12.1032\%, 13.0680\%, respectively.
In the future, wind energy will make an enormous contribution to the sustainable development of cleaner production.
Additionally, reasonable suggestions are put forward from the standpoint of the practitioners and governments.
(1)The related practitioners can use useful prediction models to predict wind turbine capacity to ensure the installation of wind turbines.
 And they can strengthen the construction of the wind power system and innovate wind power technology to achieve an efficient configuration
 of the power supply system and develop economic efficiency.
(2)Governments around the world should increase communication and cooperation to promote competition and development in the global energy market.  Moreover, governments can make rational policies and increase the cost-effectiveness of renewable energy into the power supply, which can promote the sustainable development of clean energy production in the future.

From the perspective of the new information accumulation, it focuses on mining new information rules, and new information has a more significant impact on the forecasting. Therefore, the grey forecasting model with new information priority accumulation is
suitable for the small sample with new characteristic behaviors. It can be applied to the prediction of other
cleaner production such as solar and natural gas. In the future, we will combine the new information priority accumulation with
other grey models to research whether the prediction accuracy of other grey models would be improved.


\section*{Acknowledgments}

This paper was supported by the National Natural Science
Foundation of China (No.71901184, 71771033, 71571157, 11601357), the
Humanities and Social Science Project of Ministry of Education of
China (No.19YJCZH119), the funding of V.C. \& V.R. Key Lab
of Sichuan Province (SCVCVR2018.10VS), National Statistical Scientific Research Project (2018LY42),
the Open Fund (PLN201710) of State Key Laboratory of Oil and Gas Reservoir Geology and
Exploitation, and the Longshan academic talent research
supporting program of SWUST (No.17LZXY20).

\section*{References}

\bibliography{greybib}

\end{document}